\newtheorem{theorem}{Theorem}
\newtheorem{corollary}{Corollary}
\newtheorem{lemma}{Lemma}
\newcommand{\suf}[1]{\text{2-suf}\ensuremath{(#1)}}
\newcommand{\pref}[1]{\text{2-pre}\ensuremath{(#1)}}
\newcommand{\zso}[1]{\text{source}\ensuremath{(#1)}}
\newcommand{\zb}[1]{\text{border}\ensuremath{(#1)}}
\newcommand{\tfat}[1]{\text{fat}\ensuremath{(#1)}}
\newcommand{\rev}[1] {\text{rev}\ensuremath{(#1)}}
\newcommand{\skipi}[1] {\text{skip}\ensuremath{(#1)}}
\newcommand{\locus}[1] {\text{locus}\ensuremath{(#1)}}
\newcommand{\str}[1] {\text{str}\ensuremath{(#1)}}
\newcommand{\length}[1] {\ensuremath{|\text{str}(#1)|}}
\newcommand{\rank}[1] {\text{rank}\ensuremath{(#1)}}
\newcommand{\fat}{\ensuremath{D^{\mathrm{fat}}}}
\newcommand\floor[1]{\lfloor#1\rfloor}
\newcommand{\occ} {\ensuremath{\mathrm{occ}}}
\newcommand{\td} {\ensuremath{\mathrm{T}_D}}
\newcommand{\tdp} {\ensuremath{\mathrm{T}_{D'}}}
\newcommand{\tf} {\ensuremath{\mathrm{T}_{F}}}
\begin{document}

\begin{frontmatter}

\title{Time-Space Trade-Offs for Lempel--Ziv Compressed Indexing \tnoteref{t1}}
\tnotetext[t1]{A preliminary version of this paper appeared in the Proceedings of the 28th Annual symposium on Combinatorial Pattern Matching, 2017}

\author{Philip Bille\corref{cor1}}
\ead{phbi@dtu.dk}
\cortext[cor1]{Supported by the Danish Research Council (DFF -- 4005-00267, DFF -- 1323-00178)}
\author{Mikko Berggren Ettienne\corref{cor2}}
\ead{miet@dtu.dk}
\cortext[cor2]{Supported by the Danish Research Council (DFF -- 4005-00267)}
\author{Inge Li G{\o}rtz\corref{cor1}}
\ead{inge@dtu.dk}

\author{Hjalte Wedel Vildh{\o}j \corref{}}
\ead{hwvi@dtu.dk}

\begin{abstract}
Given a string $S$, the \emph{compressed indexing problem} is to
    preprocess $S$ into a compressed representation that supports fast
    \emph{substring queries}. The goal is to use little space relative to the
    compressed size of $S$ while supporting fast queries. We present a
    compressed index based on the Lempel--Ziv 1977 compression scheme.
    We obtain the following time-space trade-offs:
    For constant-sized alphabets
	\begin{enumerate}
            \item[(i)] $O(m + \occ \lg\lg n)$ time using $O(z\lg(n/z)\lg\lg z)$ space, or
            \item[(ii)] $O(m(1 + \frac{\lg^\epsilon z}{\lg(n/z)}) + \occ(\lg\lg n + \lg^\epsilon z))$ time using $O(z\lg(n/z))$ space, 
    \end{enumerate}
    For integer alphabets polynomially bounded by $n$
	\begin{enumerate}
        \item[(iii)] $O(m(1 + \frac{\lg^\epsilon z}{\lg(n/z)}) + \occ(\lg\lg n + \lg^\epsilon z))$ time using $O(z(\lg(n/z) + \lg\lg z))$ space, or
        \item[(iv)] $O(m + \occ(\lg\lg n + \lg^{\epsilon} z))$ time using $O(z(\lg(n/z) + \lg^{\epsilon} z))$ space,

    \end{enumerate}
    where $n$ and $m$ are the length of the input string and query string respectively,
    $z$ is the number of phrases in the LZ77 parse of the input string,
    $\occ$ is the number of occurrences of the query in the input
    and $\epsilon > 0$ is an arbitrarily small constant.
        In particular, (i) improves the leading term in the query time of the
        previous best solution from $O(m\lg m)$ to $O(m)$ at the cost of
        increasing the space by a factor $\lg \lg z$. Alternatively, (ii)
        matches the previous best space bound, but has a leading term in the query
        time of $O(m(1+\frac{\lg^{\epsilon} z}{\lg (n/z)}))$. However, for any
        polynomial compression ratio, i.e., $z = O(n^{1-\delta})$, for constant
        $\delta > 0$, this becomes $O(m)$. Our index also supports
        extraction of any substring of length $\ell$ in $O(\ell + \lg(n/z))$
        time. Technically, our results are obtained by novel extensions and
        combinations of existing data structures of independent interest,
        including a new batched variant of weak prefix search.
\end{abstract}
\begin{keyword}
Compressed Indexing; Pattern Matching; LZ77; Prefix Search
\end{keyword}

\end{frontmatter}

\section{Introduction}
Given a string $S$, the \emph{compressed indexing problem} is to preprocess $S$
into a compressed representation that supports fast \emph{substring queries},
that is, given a string $P$, report all
occurrences of substrings in $S$ that match $P$. Here the compressed representation
can be any compression scheme or measure ($k$th order entropy, smallest
grammar, Lempel--Ziv, etc.). The goal is to use little space relative to the
compressed size of $S$ while supporting fast queries. Compressed indexing is a
key computational primitive for querying massive data sets and the area has
received significant attention over the last decades with numerous theoretical
and practical solutions, see e.g.~\cite{Karkkainen95lempel-zivparsing, FM2000,
makinen2000, RV2000, FM2001,FM2005, GGV2003, GGV2004, FMMN2007, NM2007,
MNSV2010, CN2012, KN2014, gagie2014lz77, KarkkainenS98, BelazzouguiGGKO15} and the surveys~\cite{NM2007,
Navarro2012, Navarro2016, GagieP14}.

The Lempel--Ziv 1977 compression scheme (LZ77)~\cite{Ziv77auniversal} is a
classic compression scheme based on replacing repetitions by references in a
greedy left-to-right order. Numerous variants of LZ77 have been developed and
several widely used implementations are available (such as
\texttt{gzip}~\cite{gzip}). Recently, LZ77 has been shown to be particularly
effective at handling highly-repetitive data sets~\cite{MNSV2010, Navarro2012,
KN2014, CFMN2016, BCGPR2015} and LZ77 compression is always at least as
powerful as any grammar representation~\cite{Rytter, Charikar}.

In this paper, we consider compressed indexing based on LZ77 compression.
Relatively few results are known for this version of the problem.
Let $n$, $m$, and $z$ denote the length of the input
string, the length of the pattern string, and the number of phrases in the LZ77 parse of the string (definition follows), respectively.
K\"{a}rkk\"{a}inen and Ukkonen introduced the problem in
1996~\cite{Karkkainen95lempel-zivparsing} and gave an initial solution that required read-only access to the uncompressed text.
Interestingly, this work is among the first results in compressed
indexing~\cite{NM2007}. More recently, Gagie et al.~\cite{Gagie2012,
gagie2014lz77} revisited the problem and gave a solution using space
$O(z\lg(n/z))$ and query time $O(m \lg m + \occ \lg \lg n)$, where
$\occ$ is the number of occurrences of $P$ in $S$. Note that these bounds assume a constant-sized alphabet.

\subsection{Our Results}
We show the following main result.
\begin{theorem}
\label{theorem:main}

    We can build a compressed-index supporting substring queries in:\\
    For constant-sized alphabets
    \begin{enumerate}
        \item[(i)] $O(m + \occ \lg\lg n)$ time using $O(z\lg(n/z)\lg\lg z)$ space, or
        \item[(ii)] $O(m(1 + \frac{\lg^\epsilon z}{\lg(n/z)}) + \occ(\lg\lg n + \lg^\epsilon z))$ time using $O(z\lg(n/z))$ space,
    \end{enumerate}
    For integer alphabets polynomially bounded by $n$
	\begin{enumerate}
        \item[(iii)] $O(m(1 + \frac{\lg^\epsilon z}{\lg(n/z)}) + \occ(\lg\lg n + \lg^\epsilon z))$ time using $O(z(\lg(n/z) + \lg\lg z))$ space, or
        \item[(iv)] $O(m + \occ(\lg\lg n + \lg^{\epsilon} z))$ time using $O(z(\lg(n/z) + \lg^{\epsilon} z))$ space,
    \end{enumerate}
\noindent where $n$ and $m$ are the length of the input string and query string respectively,
$z$ is the number of phrases in the LZ77 parse of the input string,
$\occ$ is the number of occurrences of the query in the input
and $\epsilon > 0$ is an arbitrarily small constant.
\end{theorem}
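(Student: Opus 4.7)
The plan is to follow the standard LZ77-based indexing framework of K\"arkk\"ainen and Ukkonen as refined by Gagie et al.: classify every occurrence of $P$ in $S$ as \emph{primary} (crosses or ends at a phrase boundary) or \emph{secondary} (lies strictly inside a single phrase). Every primary occurrence admits a unique split $P = P_1 P_2$ with $P_1$ ending at a phrase boundary and $P_2$ starting the next phrase. Build one compacted trie $T_R$ on the reversed phrase prefixes and one $T_S$ on the suffixes of $S$ starting at each phrase boundary. Then the primary occurrences are, summed over all $m-1$ nonempty splits $i$, the points reported by a 2D orthogonal range query on the rectangle defined by the locus of $\rev{P[1..i]}$ in $T_R$ and the locus of $P[i+1..m]$ in $T_S$. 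Thus the problem reduces to (a) computing all $2(m-1)$ loci, (b) performing $m-1$ range queries, (c) expanding the primary hits to secondary occurrences, and (d) verifying candidates via substring extraction.

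The central new ingredient is to compute all loci in $O(m)$ total time rather than $O(m)$ per split. I would design a \emph{batched weak prefix search}: given a compacted trie and a single query string $Q$, return, for every prefix of $Q$, an alleged locus, with the guarantee that the locus is correct whenever that prefix actually occurs as a string of the trie. Applied to $P$ in $T_S$ and to $\rev{P}$ in $T_R$ this produces the loci needed at every split in $O(m)$ total time, amortizing the hashing and skip-pointer work across one left-to-right scan of $P$. Spurious answers for prefixes that do not occur are harmless: the corresponding 2D rectangle is empty, so no false primary occurrence is ever reported, and each genuine primary hit can be confirmed in $O(m)$ amortized time using the extraction oracle.

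Secondary occurrences are handled recursively in the classical way: each occurrence at a position contained in some phrase's source spawns a copy inside the corresponding phrase, and the spawned occurrences are discovered via a range query on an interval tree over the source intervals. This pays $O(\lg\lg n)$ or $O(\lg\lg n + \lg^\epsilon z)$ per reported occurrence depending on the range-reporting structure used. Substring extraction in $O(\ell + \lg(n/z))$ time is obtained from a block-tree style structure over the parse; the same structure supports the character comparisons used during the weak prefix search and during verification.

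The four trade-offs arise by plugging different data structures into the same skeleton. Variant (i) uses the $O(z\lg(n/z)\lg\lg z)$-space version of the tries that supports $O(\lg\lg n)$ range reporting and exact locus lookup, giving the $O(m + \occ \lg\lg n)$ bound for constant alphabets. Variant (ii) drops the $\lg\lg z$ factor at the cost of a slower blind descent contributing the $\lg^\epsilon z/\lg(n/z)$ factor on $m$ and an extra $\lg^\epsilon z$ per occurrence. Variants (iii) and (iv) are the integer-alphabet counterparts, where additional $\lg\lg z$ or $\lg^\epsilon z$ space is spent on character-indexed trie navigation. The main obstacle I anticipate is designing the batched weak prefix search so that it jointly amortizes trie navigation, the extraction-based verifications, and the hand-off to the 2D range structure, while fitting in the claimed space for both alphabet regimes; in particular, avoiding a $\Theta(\lg m)$ verification penalty at each split, which is exactly the factor separating the previous $O(m\lg m)$ bound from the target $O(m)$.
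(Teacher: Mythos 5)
The decomposition into primary and secondary occurrences, the bidirectional trie with 2D range reporting, the recursive treatment of secondary occurrences, and the appeal to a batched weak prefix search are all aligned with the paper. However, there is a genuine gap at the heart of your plan: you propose to run a batched weak prefix search over all $m-1$ prefix-suffix splits of $P$ and claim this yields all loci in $O(m)$ total time. The paper's batched weak prefix search (Lemma~\ref{fast-fat}) costs $O(m + h(m/x + \lg x))$ for $h$ queries, which is $\Omega(m + h\lg h)$ even when $x$ is chosen optimally; with $h = m-1$ this is $\Theta(m\lg m)$, exactly the bound you are trying to beat. There is no obvious way to amortize the binary-search-over-skip-interval step below $\Theta(\lg)$ per independent query, and your sketch does not supply one.

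The paper's actual route around this is to \emph{reduce the number of splits} from $m-1$ to $O(m/\tau)$ by augmenting the trie: for every phrase, it indexes not only the reversed phrase prefix but the reversed ``relevant substrings'' $S[i,j+k]$ for $0\le k<\tau$ (a generalization of the bidirectional search technique). Every primary occurrence is then guaranteed to be captured by a split at a multiple of $\tau$, so only $O(m/\tau)$ queries are needed. The cost is an extra factor $\tau$ in the number of indexed strings, which is where the $\lg\lg z$, $\lg^\epsilon z$, and $\lg(n/z)$ factors in the four space bounds come from; short patterns ($m\le\tau$) are handled by a separate $O(z\tau)$-size generalized suffix tree. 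Without this $\tau$-sampling you cannot obtain the $O(m)$ leading term. Additionally, your claim that spurious weak-prefix answers are harmless because ``the corresponding 2D rectangle is empty'' does not hold: a weak prefix search is permitted to return an arbitrary (nonempty) range when the query does not prefix any stored string, so a false rectangle can contain real points and yield false positive reported occurrences. The paper therefore devotes a nontrivial verification stage, using a collision-free fingerprinting function for power-of-two-length substrings plus SLP-based extraction, to discard false loci \emph{before} issuing the range queries; some replacement for this step is necessary in your construction as well.
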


Compared to the previous bounds Thm.~\ref{theorem:main} obtains new interesting
trade-offs. In particular, Thm.~\ref{theorem:main} (i) improves the leading term
in the query time of the previous best solution from $O(m\lg m)$ to $O(m)$ at
the cost of increasing the space by only a factor $\lg \lg z$. Alternatively,
Thm.~\ref{theorem:main} (ii) matches the previous best space bound, but has a
leading term in the query time of $O(m(1+\frac{\lg^{\epsilon} z}{\lg (n/z)}))$.
However, for any polynomial compression ratio, i.e., $z = O(n^{1-\delta})$, for constant
$\delta > 0$, this becomes $O(m)$.

Gagie et al.~\cite{gagie2014lz77} also showed how to extract an arbitrary
substring of $S$ of length $\ell$ in time $O(\ell + \lg n)$. We show how to
support the same extraction operation and slightly improve the time to $O(\ell + \lg
(n/z))$.

Technically, our results are obtained by new variants and extensions of
existing data structures in novel combinations. In particular, we consider a
batched variant of the \emph{weak prefix search problem} and give the first
non-trivial solution to it. We also generalize the well-known bidirectional
compact trie search technique~\cite{Lewenstein13} to reduce the number of
queries at the cost of increasing space. Finally, we show how to combine this
efficiently with range reporting and fast random-access in a balanced grammar
leading to the result.

We note that none of our data structures assume constant-sized alphabet
and therefore, Thm.~\ref{theorem:main} is an instance of a full time-space trade-off
for general alphabets. We discuss the details in Sec.~\ref{combining}.

\section{Preliminaries}

\label{section:pre}
We assume a standard unit-cost RAM model with
word size $w = \Theta(\lg n)$ and that the input is from an integer alphabet $\Sigma = \{1, 2, \ldots, n^{O(1)}\}$
and measure space complexity in words unless otherwise specified.

A string $S$ of length $n = |S|$ is a sequence $S[1]\cdots S[n]$ of $n$ characters drawn from~$\Sigma$.
The string $S[i]\cdots S[j]$ denoted $S[i,j]$ is called a \textit{substring} of $S$.
$\epsilon$ is the empty string and $S[i,i] = S[i]$ while $S[i,j] = \epsilon$ when $i > j$.
The substrings $S[1,i]$ and $S[j,n]$ are the $i^{th}$
\textit{prefix} and the $j^{th}$ \textit{suffix} of $S$ respectively. The reverse
of the string $S$ is denoted $\rev{S} = S[n]S[n-1]\cdots S[1]$.
We use the results from Fredman et al.~\cite{Fredman} when referring to perfect
hashing allowing us to build a dictionary on $O(k)$ integer keys
in $O(k)$ expected time supporting constant time lookups.

\subsection{Compact Tries}

A \textit{trie} for a set $D$ of $k$ strings
is a rooted tree where the vertices correspond to the prefixes
of the strings in $D$.
$\str{v}$ denotes the prefix corresponding to the vertex $v$.
$\str{v} = \epsilon$ if $v$ is the root while $v$ is the parent of $u$
if $\str{v}$ is equal to $\str{u}$ without the last character.
We may use $v$ in place of $\str{v}$ when it is clear from the
context that we talk about the $\str{v}$.
This character is then the $\textit{label}$ of the edge from $u$ to $v$.
The \textit{depth} of vertex $v$ is the number of edges on the path from $v$ to the root.

We assume each string in $D$ is terminated by a special character $\$ \notin \Sigma$ such that each string in $D$
corresponds to a leaf.
The children of each vertex are sorted from left to right in increasing lexicographical order,
and therefore the left to right order of the leaves corresponds
to the lexicographical order of the strings in $D$.
Let \rank{s} denote the rank of the string $s \in D$ in this order.

A \textit{compact trie} for $D$ denoted \td\ is obtained from the trie by
removing all vertices $v$ with exactly one child excluding the root
and replacing the two edges incident to $v$ with a single edge from its
parent to its child. This edge is then labeled with the concatenation of
the edge labels it replaces, thus the edges of a compact trie may be labeled by strings.
The \textit{skip interval} of a vertex $v \in \td$ with parent $u$
is $(\length{u},\length{v}]$ denoted $\skipi{v}$ and $\skipi{v} = \emptyset$ if $v$ is the root.
The \textit{locus} of a string $s$ in $\td$, denoted \locus{s},
is the minimum depth vertex $v$ such that $s$ is a prefix of \str{v}.
If there is no such vertex, then $\locus{s} = \bot$.

In order to reduce the space used by \td\ we
only store the first character of every edge and in every vertex
$v$ we store $\length{v}$ (This variation is also known as a PATRICIA tree~\cite{Morrison}).
We navigate $\td$ by storing a dictionary in every internal vertex
mapping the first character of the label of an edge
to the respective child. The size of \td\ is $O(k)$.

\subsection{Karp--Rabin Fingerprints}

A \emph{Karp--Rabin fingerprinting function}~\cite{Karp:1987} is a randomized hash function for strings.
The fingerprint for a string $S$ of length $n$ is defined as:
\[
    \phi(S) = \sum^{n}_{i=1} S[i] \cdot r^{i-1} \text{ mod } p
\]
where $p$ is a prime and $r$ is a random integer in $\mathbb{Z}_{p}$ (the field of integers modulo $p$).
Storing the values $n$, $r^n \text{ mod } p$ and $r^{-n} \text{ mod } p$ along with
a fingerprint allows for efficient composition and subtraction of fingerprints:
\begin{lemma}
\label{karp1}
    Let $x,y,z$ be strings such that $x = yz$. Given two of the three fingerprints
    $\phi(x), \phi(y)$ and $\phi(z)$, the third can be computed in constant time.
\end{lemma}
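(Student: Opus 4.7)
The plan is to derive the standard compositional identity for Karp--Rabin fingerprints and then observe that each of the three cases is a one-line algebraic rearrangement, where every auxiliary quantity (the required power of $r$ modulo $p$ and the relevant length) can be obtained in $O(1)$ word operations from what is stored alongside the fingerprints.

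First I would expand the definition. Write $|y| = j$ and $|z| = k$, so $|x| = j + k$. Splitting the sum that defines $\phi(x)$ at position $j$ and factoring $r^{j}$ out of the tail gives
\[
\phi(x) \;\equiv\; \sum_{i=1}^{j} y[i]\, r^{i-1} \;+\; r^{j}\sum_{i=1}^{k} z[i]\, r^{i-1} \;\equiv\; \phi(y) + r^{|y|}\phi(z) \pmod{p}.
\]
From this single identity the three cases of the lemma are immediate: given $\phi(y), \phi(z)$ compute $\phi(x) = \phi(y) + r^{|y|}\phi(z)$; given $\phi(x), \phi(y)$ compute $\phi(z) = (\phi(x) - \phi(y))\, r^{-|y|}$; given $\phi(x), \phi(z)$ compute $\phi(y) = \phi(x) - r^{|y|}\phi(z)$.

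The only nontrivial bookkeeping is producing $r^{|y|} \bmod p$ (or its inverse) in constant time, since this exponent is not the length of any string whose data we are directly given in the worst case. Using the convention that each fingerprint is accompanied by the string's length $n$ and by $r^{n}, r^{-n} \bmod p$, we compute the missing length by subtraction, e.g. $|y| = |x| - |z|$, and the missing power by one modular multiplication, e.g. $r^{|y|} \equiv r^{|x|} \cdot r^{-|z|}$ and $r^{-|y|} \equiv r^{-|x|} \cdot r^{|z|}$, all in $O(1)$ time on the RAM. The main (and mild) obstacle is just verifying that in every one of the three cases the two fingerprints supplied already carry enough stored exponent data to reconstruct the required power of $r$; this is a mechanical enumeration of the three cases and completes the proof.
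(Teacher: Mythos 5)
Your proof is correct and is the standard argument; the paper itself does not spell out a proof of this lemma but instead defers to the cited references (Porat and Breslauer), which use exactly this compositional identity $\phi(x) \equiv \phi(y) + r^{|y|}\phi(z) \pmod p$ together with the stored lengths and stored powers $r^{\pm n} \bmod p$ to handle each of the three rearrangements in $O(1)$ time. Your bookkeeping check --- that $|y|$ and $r^{\pm|y|}$ can be recovered from what accompanies the two given fingerprints in all three cases, e.g.\ $r^{|y|} = r^{|x|}\cdot r^{-|z|}$ when $\phi(y)$ is the one being solved for --- is the only point that needs care, and you handle it correctly.
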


It follows that we can compute and store the fingerprints
of each of the prefixes of a string $S$ of length $n$ in $O(n)$ time and space such that we afterwards
can compute the fingerprint of any substring $S[i,j]$ in constant time.
We say that the fingerprints of the strings $x$ and $y$ \textit{collide} when $\phi(x) = \phi(y)$ and $x \neq y$.
A fingerprinting function $\phi$ is \textit{collision-free} for a set of strings if there are no fingerprint collisions
between any of the strings.

\begin{lemma}\label{karp2}
    Let $x$ and $y$ be different strings of length at most $n$ and let $p = \Theta(n^{2+\alpha})$
    for some $\alpha > 0$.
    The probability that $\phi(x) = \phi(y)$ is $O(1/n^{1+\alpha})$.
\end{lemma}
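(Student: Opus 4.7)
The plan is to recast the collision event as the event that the random $r$ is a root of a low-degree nonzero polynomial over the finite field $\mathbb{Z}_p$, and then apply the standard bound that a nonzero polynomial of degree $d$ has at most $d$ roots in a field. Concretely, I would write
\[
\phi(x) - \phi(y) \equiv \sum_{i=1}^{|x|} x[i]\,r^{i-1} - \sum_{i=1}^{|y|} y[i]\,r^{i-1} \pmod p,
\]
and regard the right-hand side as a polynomial $Q(r) \in \mathbb{Z}_p[r]$ of degree at most $n-1$. A fingerprint collision is exactly the event $Q(r) \equiv 0 \pmod p$.

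Next I would argue that $Q$ is not the zero polynomial. Assume w.l.o.g.\ $|x| \ge |y|$. If $|x| > |y|$, the leading coefficient is $x[|x|]$, which lies in the alphabet $\Sigma \subseteq \{1,\dots,n^{O(1)}\}$ and is therefore nonzero modulo $p = \Theta(n^{2+\alpha})$ for all sufficiently large $n$. If $|x| = |y|$, then since $x \ne y$ there is some position where the coefficients differ, and again the corresponding difference is a nonzero residue mod $p$ because each individual character is strictly smaller than $p$. In either case $Q \not\equiv 0$.

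Having established $\deg Q \le n-1$ and $Q \not\equiv 0$, the key fact is that in the field $\mathbb{Z}_p$ (here it matters that $p$ is prime) such a polynomial has at most $n-1$ roots. Since $r$ is chosen uniformly from $\mathbb{Z}_p$, the probability that $r$ is one of these roots is at most $(n-1)/p = O(n / n^{2+\alpha}) = O(1/n^{1+\alpha})$, which is the claimed bound.

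The only mildly delicate point, and the one I would be careful to justify rather than skim over, is the nondegeneracy step: one must verify both that the coefficients of $Q$ come from a range strictly less than $p$ (so that no single character accidentally vanishes mod $p$) and that differences of length or content force at least one coefficient to survive the reduction. The rest is a routine application of the degree/root bound over a field.
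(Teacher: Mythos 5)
The paper does not include its own proof of this lemma; it points to Porat and Breslauer. The argument you give is precisely the standard one appearing in those and related sources: model the collision event as the random evaluation point $r$ landing on a root of the difference polynomial $Q \in \mathbb{Z}_p[r]$, bound $\deg Q \le n-1$, and invoke the degree/root bound over the field $\mathbb{Z}_p$ to get probability at most $(n-1)/p = O(1/n^{1+\alpha})$. So in substance your proof matches the intended one.

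The one place to be more careful is the nondegeneracy step, which you correctly flag as the delicate point but then resolve with an assertion that is not actually guaranteed by the stated hypotheses. You claim every character (and every difference of characters) lies strictly below $p$ because $\Sigma \subseteq \{1,\dots,n^{O(1)}\}$ and $p = \Theta(n^{2+\alpha})$. But $n^{O(1)}$ permits an arbitrarily large constant exponent, so a character could in principle be a multiple of $p$, in which case the corresponding coefficient of $Q$ vanishes modulo $p$ --- and in the extreme, $Q$ could reduce to the zero polynomial even though $x \ne y$ as integer sequences, making the collision probability $1$. The standard resolution, implicit in essentially every Karp--Rabin treatment, is either to pre-reduce characters into $\mathbb{Z}_p$ (so that ``different strings'' means different as sequences over $\mathbb{Z}_p$) or to require $p > \max \Sigma$. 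Under either convention your argument is airtight; without one of them the lemma as literally stated is not quite true. This is a matter of hypothesis bookkeeping rather than a flaw in the core argument, but since you explicitly said you would be careful at exactly this point, you should pin the convention down rather than assert the inequality.
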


\noindent See e.g.~\cite{Porat,Breslauer} for proofs of Lemma~\ref{karp1} and~\ref{karp2}.

\subsection{Range Reporting}

Let $X \subseteq {\{0,\ldots, u\}}^d$ be a set of points in a d-dimensional grid. The \textit{orthogonal range reporting problem}
in $d$-dimensions is to compactly represent $X$ while supporting \textit{range reporting queries}, that is, given
a rectangle $R = [a_1, b_2] \times \cdots \times [a_d, b_d]$ report all points in the set $R \cap X$.
We use the following results for 2-dimensional range reporting:

\begin{lemma}[Chan et al.~\cite{Larsen11}]
\label{lemma:larsen}
    For any set of $n$ points in $[0,u] \times [0,u]$ and $2 \leq B \leq \lg^\epsilon n, 0 < \epsilon < 1$ we can solve 2-d orthogonal range reporting with
    $O(n\lg n)$ expected preprocessing time, $O(n \lg_B\lg n)$ space and $(1+k) \cdot O(B\lg\lg u)$ query time
    where $k$ is the number of occurrences inside the rectangle.
\end{lemma}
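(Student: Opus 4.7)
Since the lemma is attributed to Chan, Larsen, and Patrascu, my proof would consist of invoking the construction from [Larsen11] as a black box; however, to make the stated trade-off plausible, let me sketch the approach one would take from scratch.

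The plan combines three standard ingredients. First, reduce both coordinates to rank space via predecessor structures such as y-fast tries; this introduces the $\lg\lg u$ factor each time a query endpoint must be located, or a rank-space point must be reported back in the original universe. Second, build a range tree over one coordinate (say $x$) with branching factor $B$, giving depth $\lg_B n$. To answer a query $[a_x,b_x]\times[a_y,b_y]$, I would locate $a_x$ and $b_x$ in rank space, walk up to their LCA, and enumerate the $O(B \lg_B n)$ canonical subtrees whose $x$-extent lies inside $[a_x,b_x]$; at each canonical subtree one reports the points whose $y$-coordinate falls in $[a_y,b_y]$. Third, carefully design the secondary $y$-structures attached to these subtrees so as not to inflate the space.

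The main obstacle, and the crux of the argument, is cutting the space from the naive $O(n \lg_B n)$ (one secondary structure per level) down to the claimed $O(n \lg_B \lg n)$. The idea I would pursue is to equip only a sparse sampled set of roughly $\lg_B \lg n$ levels with an explicit secondary structure, and to forward queries from unsampled levels to their nearest sampled ancestor via a ball-inheritance-style indirection, each hop costing $O(B \lg\lg u)$ amortised per reported point. Summed over the $O(B \lg_B n)$ canonical subtrees and the $k$ output points, this yields the stated $(1+k)\cdot O(B \lg\lg u)$ query time, while the sampling guarantees the space bound. The preprocessing time $O(n\lg n)$ is then dominated by sorting the points and by the construction of the y-fast tries and perfect hash tables used for constant-time navigation inside the tree.
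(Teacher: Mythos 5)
The paper gives no proof of this lemma: it is stated as a direct citation of Chan et al.~\cite{Larsen11} and used purely as a black box thereafter. Your decision to invoke the cited construction rather than re-derive it therefore matches the paper's treatment exactly, and for the paper's purposes your first sentence alone would suffice.

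The plausibility sketch you append --- rank-space reduction via y-fast tries, a $B$-ary range tree, and sparsified secondary structures connected by ball-inheritance-style forwarding --- names the right ingredients from the cited reference, but the accounting in your sparsification step does not quite work as stated. If you keep explicit secondary structures at only about $\lg_B\lg n$ of the $\lg_B n$ levels of the tree, the gap between consecutive sampled levels is on the order of $\lg n / \lg\lg n$, and forwarding a reported point across that gap naively would cost far more than the $O(B\lg\lg u)$ per point you claim. In the actual Chan--Larsen construction the $\lg_B\lg n$ factor does not count a fraction of sampled levels; it counts the number of \emph{granularity classes} in a hierarchy of skip pointers of geometrically increasing reach, with each class contributing $O(n)$ words and each ball-inheritance query decomposing into $O(B)$ hops across these classes. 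This distinction is what makes the time and space bounds hold simultaneously. None of this affects the paper, which never opens the black box, but if you intend your sketch to stand as an argument you would need to replace the ``sample $\lg_B\lg n$ levels'' step with the layered skip-pointer hierarchy.
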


\subsection{LZ77}

The Ziv--Lempel algorithm from 1977~\cite{Ziv77auniversal} provides a simple and
natural way to compress strings.

The \textit{LZ77 parse} of a string $S$ of length $n$ is a sequence $Z$ of
$z$ subsequent substrings of $S$ called \textit{phrases} such that $S = Z[1]Z[2]\cdots Z[z]$.
$Z$ is constructed in a left to right pass of $S$:
Assume that we have found the sequence $Z[1, i]$ producing the string $S[1, j - 1]$
and let $S[j, j'-1]$ be the longest prefix of $S[j, n-1]$
that is also a substring of $S[1, j' - 2]$. Then $Z[i+1] = S[j, j']$.
The occurrence of $S[j, j' - 1]$ in $S[1, j'-2]$ is called
the \textit{source} of the phrase $Z[i]$.
Thus a phrase is composed by the contents of its possibly empty
source and a trailing character which we call the \textit{phrase border} and is typically represented as a
triple $Z[i] = (start, len, c)$ where \textit{start} is the
starting position of the source, \textit{len} is
the length of the source and $c \in \Sigma$ is the border.
For a phrase $Z[i] = S[j, j']$ we denote the position of its
border by $\zb{Z[i]} = j'$ and its source by $\zso{Z[i]} = S[j, j' - 1]$.
For example, the string $abcabcabc\cdots abc$ of length $n$ has the LZ77 parse
$|a|b|c|abcabcabc\cdots abc|$ of length 4 which is represented as $Z = (0,0,a)(0,0,b)(0,0,c)(1,n-4,c)$.

\section{Prefix Search}
\label{section:fat}

The \textit{prefix search} problem is
to preprocess a set of strings such that later, we can find
all the strings in the set that are prefixed by some query string.
Belazzougui et al.~\cite{BelazzouguiBPV10} consider the \textit{weak prefix search} problem,
which is a relaxation of the prefix search problem. In this variant,
we report only the ranks (in lexicographic order) of the strings that are prefixed by the
query pattern and we are only required to answer correctly when at least
one of the strings is prefixed by the pattern.
Thus we may answer arbitrarily when no strings are prefixed by the query pattern.

\begin{lemma}[Belazzougui et al.~{\cite[appendix H.3]{BelazzouguiBPV10}}]
\label{fat1}
    Given a set $D$ of $k$ strings with average length $l$, from an alphabet of size $\sigma$,
    we can build a data structure using $O(k(\lg l + \lg\lg \sigma))$ bits of space
    supporting weak prefix search for a pattern $P$ of length $m$ in $O(m\lg \sigma/w + \lg m)$ time
    where $w$ is the word size.
\end{lemma}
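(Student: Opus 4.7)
The plan is to build the compact trie (Patricia tree) $\td$ of $D$ and augment it with a Karp--Rabin based dictionary that finds the locus of the query pattern by a ``blind'' binary search on depths rather than by root-to-leaf character comparisons. The compact trie has at most $k-1$ internal vertices and exactly $k$ leaves, so its combinatorial structure fits in $O(k)$ words. For each internal vertex $v \in \td$ let $f(v)$ be the unique integer in $\skipi{v}$ with maximum 2-adic valuation (its ``2-fattest'' depth). Since every string in the subtree of $v$ agrees with $\str{v}$ on its first $\length{v}$ characters, the fingerprint $\phi(\str{v}[1..f(v)])$ is well defined, and I would store a dictionary $H$ mapping each such fingerprint to a handle for $v$.

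To answer a query for $P$ with $|P|=m$, first pack $P$ into $O(m\lg\sigma/w)$ machine words and compute the prefix fingerprints of $P$ in the same time so that any $\phi(P[1..d])$ is available in $O(1)$ time via Lemma~\ref{karp1}. Then binary-search for the deepest $d\in[0,m]$ such that $P[1..d]$ is a prefix of some string of $D$: at each step probe $H$ with $\phi(P[1..f])$, where $f$ is the 2-fattest number in the current candidate range, and shrink the range according to whether the probe returns a vertex whose skip interval is consistent with the current state of the search. A standard argument about 2-fattest numbers shows that this converges in $O(\lg m)$ probes to a candidate exit vertex $v^\star$, and the answer is the leaf range below $v^\star$, which for weak prefix search is exactly the set of ranks we must output.

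Correctness in the weak model follows because, by assumption, at least one string of $D$ is prefixed by $P$, so the true locus exists and the blind search converges to it provided $H$ is collision-free on the prefixes actually stored; by Lemma~\ref{karp2} this holds with high probability once the modulus is chosen large enough. To push the space from the naive $O(k\lg(kl))$ bits down to the claimed $O(k(\lg l + \lg\lg\sigma))$ bits, I would (a) store only an $O(\lg l + \lg\lg\sigma)$-bit \emph{signature} of each fingerprint in $H$ instead of the full value, with a separate succinct verification step at $v^\star$, (b) represent the skip lengths with $O(\lg l)$ bits on average by exploiting that the total length of the skip intervals along any root-to-leaf path is bounded, and (c) encode the first character of each edge in $O(\lg\lg\sigma)$ bits using a monotone minimal perfect hash over the distinct labels that actually occur at each internal vertex.

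The main obstacle is precisely step (a)--(c): obtaining the succinct $O(k(\lg l + \lg\lg\sigma))$-bit layout while still allowing each binary-search probe to be resolved in $O(1)$ worst-case time and without introducing spurious ``false exits'' from the hash lookups. Everything else---the Patricia trie itself, the choice of the 2-fattest depth $f(v)$, the $O(\lg m)$-step blind binary search, and the invocation of Lemma~\ref{karp2} to argue global collision-freeness---is fairly standard once this succinct layout is in place.
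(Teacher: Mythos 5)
This lemma is not proved in the paper --- it is cited verbatim from Belazzougui et al.~\cite{BelazzouguiBPV10}, and the paper uses it only as a point of departure for its own variant in Section~\ref{section:fat} (Lemma~\ref{fast-fat}). So there is no in-paper proof to compare against. That said, your sketch reconstructs the algorithmic skeleton of the cited construction faithfully: a Patricia trie on $D$; the 2-fattest depth $f(v)$ inside each vertex's skip interval; a dictionary keyed on (signatures of) fingerprints of 2-fattest prefixes; and an $O(\lg m)$-step blind binary search for the exit vertex, with the $O(m\lg\sigma/w)$ additive term paid once for packing $P$ and precomputing its prefix fingerprints so that any $\phi(P[1..d])$ is an $O(1)$-time lookup. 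This is exactly the machinery the paper itself adapts in Section~3.

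The gap you flag is the real one, and it is a genuine gap in the proof as written. The query procedure is the easy part once the dictionary exists; the technical content of Belazzougui et al.\ is getting the space down from the naive $\Theta(k\,w)$ bits (full Karp--Rabin fingerprints and word-sized skip lengths) to $O(k(\lg l + \lg\lg\sigma))$ bits, while keeping each binary-search probe $O(1)$ and avoiding spurious exits in the weak model. Your items (a)--(c) point in the right direction (truncated signatures, compact skip-length encoding via concavity, $\lg\lg\sigma$-bit alphabet handling), but you explicitly say you do not know how to close them, and note that the skip-length argument should go via the \emph{total} of all skip lengths over the $O(k)$ nodes being $O(kl)$ (so Jensen gives $O(k\lg l)$ bits overall), not via a per-path bound. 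As a standalone proof of the lemma this is therefore incomplete; as a reconstruction of the cited approach and an honest identification of where the difficulty lies, it is accurate.
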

The term $m\lg \sigma/w$ stems from preprocessing $P$ with an incremental hash function
such that the hash of any substring $P[i,j]$ can be obtained in constant time afterwards.
Therefore we can do weak prefix search for $h$ substrings of $P$ in $O(m\lg \sigma/w + h\lg m)$ time.
We now describe a data structure that builds on the ideas from Lemma~\ref{fat1} but obtains the following:

\begin{lemma}
\label{fast-fat}
    Given a set $D$ of $k$ strings, we can build a data structure
    taking $O(k)$ space supporting weak prefix search for $h$ substrings of a pattern $P$ of length $m$
    in time $O(m + h(m/x + \lg x))$ where $x$ is a positive integer.
\end{lemma}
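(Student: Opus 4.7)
The plan is to adapt the $z$-fast trie of Belazzougui et al.\ underlying Lemma~\ref{fat1} so that (a) preprocessing on the pattern $P$ costs only $O(m)$ instead of $O(m\lg\sigma/w)$, and (b) each weak prefix search takes $O(m/x + \lg x)$ instead of $O(\lg m)$, at the cost of using $O(k)$ words rather than $O(k(\lg l + \lg \lg \sigma))$ bits.

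First I would build the compact trie on $D$ in $O(k)$ space. For pattern preprocessing, I would replace the packed-string hashing of Lemma~\ref{fat1} by Karp--Rabin fingerprints (Lemma~\ref{karp1}): computing $\phi(P[1,i])$ for all $i \in [0,m]$ takes $O(m)$ time and enables $\phi(P[i,j])$ to be evaluated in $O(1)$. This removes the $m\lg\sigma/w$ term entirely. The enlarged $O(k)$-word budget is then enough to store the compact trie explicitly together with a single perfect hash table mapping handle fingerprints to node information.

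Next I would refine the handle scheme. Each internal node $v$ is assigned a handle whose length $\ell_v$ lies in $\skipi{v}$, with lengths chosen so that every skip interval of size $\geq x$ contributes at least one handle at a depth that is a multiple of $x$, while skip intervals of size $<x$ receive a single handle chosen by the standard $z$-fast trie rule. The total number of handles stays $O(k)$, so the hash table and the trie together use $O(k)$ space. The per-query search then runs in two phases. In the \emph{coarse phase} I walk down the root-to-locus path in steps of $x$: for $j=1,2,\ldots,\lceil m/x \rceil$, I evaluate $\phi(Q[1,jx])$ in $O(1)$ and probe the hash table to test whether a handle at depth $jx$ lies on the path to $\locus{Q}$, stopping at the first failed probe. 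This takes $O(m/x)$ time and pins the handle length inside a known block of size $x$. In the \emph{fine phase} I perform the standard $z$-fast trie binary search within that block, taking $O(\lg x)$ time. Summed over $h$ substrings and including pattern preprocessing, the running time is $O(m + h(m/x + \lg x))$ as required.

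The main obstacle is arranging the handle placement so that both directions of the coarse phase are correct: every successful coarse probe must correspond to a genuine node on the root-to-locus path, and every failed coarse probe must localise $\locus{Q}$ to the current block. Skip intervals shorter than $x$ can straddle multiples of $x$ without contributing a multiple-of-$x$ handle, so the coarse phase must be augmented with a small auxiliary table (of total size $O(k)$) that lets the search skip over such short intervals without producing spurious failures. Once this bookkeeping is in place, correctness follows from the same fingerprint collision arguments as in Lemma~\ref{fat1}, and the space and time bounds follow by construction.
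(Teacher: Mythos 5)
Your high-level plan (Karp--Rabin fingerprinting to remove the $m\lg\sigma/w$ term, $O(k)$ words instead of $O(k)$ bits per node, a coarse pass in steps of $x$ followed by the usual $z$-fast binary search over a window of size $O(x)$) matches the paper's approach. However, the coarse phase as you describe it has a genuine gap that would make the algorithm incorrect as stated.

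You propose to place, for each node whose skip interval has length at least $x$, a \emph{single} handle at some multiple of $x$ inside that interval, and then to walk in steps of $x$, \emph{stopping at the first failed probe}. These two choices are incompatible. A node $u$ whose skip interval $\skipi{u}$ has length much larger than $x$ contains many multiples of $x$ but contributes only one handle; a probe at any of the other multiples of $x$ inside $\skipi{u}$ will miss the hash table even though the search is nowhere near $\locus{P}$. Stopping there leaves you with a ``block'' whose width is not $O(x)$ but $O(|\skipi{u}|)$, and the fine phase then costs $O(\lg |\skipi{u}|)$, which can be $\Theta(\lg m)$ --- exactly what we are trying to avoid. You flag an obstacle with handle placement, but you attribute it to skip intervals \emph{shorter} than $x$ straddling a multiple of $x$; the actual obstacle is the opposite case of skip intervals \emph{longer} than $x$. (Intervals shorter than $x$ are harmless: they contain at most one multiple of $x$, and if they contain none, the coarse phase simply never probes inside them.) The paper's fix is not an auxiliary table but a different scan discipline: the dictionary $\mathcal{H}$ stores, for each node, the \emph{smallest} multiple of $x$ in its skip interval, and the coarse phase iterates over \emph{all} $i=1,\ldots,\lfloor m/x\rfloor$, updating the current candidate $v$ whenever $ix>\length{v}$ and $\phi(P[1,ix])\in\mathcal H$, and \emph{continuing} past failures. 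The guard $ix>\length{v}$ is what suppresses spurious probes inside a long skip interval once its head has been found, and the non-stopping scan is what guarantees that the final candidate $v$ satisfies $m-\length{v}<x$. This is still $O(m/x)$ time per query.

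A secondary issue: you suggest using a \emph{single} hash table with mixed handle rules (multiples of $x$ for long intervals, the standard $z$-fast fat handle for short ones). The fine phase of the $z$-fast search probes prefixes of 2-fattest length inside the localized window, and nodes met there can have skip intervals of any size, so they all need fat handles. The paper therefore keeps two separate $O(k)$-size dictionaries: $\mathcal{H}$ for the $x$-prefixes used by the coarse phase and $\mathcal{G}$ for the fat prefixes used by the binary search; every node has a fat prefix, and nodes whose skip interval spans a multiple of $x$ additionally have an $x$-prefix. With that split and the non-stopping coarse scan, the rest of your argument (space $O(k)$, $O(m)$ preprocessing, $O(m/x+\lg x)$ per query, $O(m+h(m/x+\lg x))$ total) goes through.
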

If we know $h$ when building our data structure,
we set $x$ to $h$ and obtain a query time of $O(m + h\lg h)$ with Lemma~\ref{fast-fat}.

Before describing our data structure we need the following definition:
The \textit{2-fattest} number in a nonempty interval of strictly positive integers is the unique number
in the interval whose binary representation has the highest number of trailing zeroes.

\subsection{Data Structure}
Let \td\ be the compact trie representing the set $D$ of $k$ strings and let $x$ be a positive integer.
Denote by \tfat{v} the 2-fattest number in the skip interval of a vertex $v \in \td$.
The \textit{fat prefix} of $v$ is the length \tfat{v}\ prefix of $\str{v}$.
Denote by \fat\ the set of fat prefixes induced by the vertices of $\td$.
The $x$-prefix of $v$ is the shortest prefix of $\str{v}$
whose length is a multiple of $x$ and is in the interval \skipi{v}.
If $v$'s skip interval does not span a multiple of $x$, then $v$ has no $x$-prefix.
Let $D^x$ be the set of $x$-prefixes induced by the vertices of $\td$.
The data structure is the compact trie $\td$ augmented with:
\begin{itemize}

    \item A fingerprinting function $\phi$.

    \item A dictionary $\mathcal{G}$ mapping the fingerprints of the strings in \fat\ to their associated vertex.

    \item A dictionary $\mathcal{H}$ mapping the fingerprints of the strings in $D^x$ to their associated vertex.

    \item For every vertex $v \in \td$ we store the rank in $D$ of the string represented
        by the leftmost and rightmost leaf in the subtree of $v$, denoted $l_v$ and $r_v$ respectively.

\end{itemize}
The data structure is similar to the one by Belazzougui~et al.~\cite{BelazzouguiBPV10} except for the dictionary $\mathcal{H}$, which we use in the first step of our search.

There are $O(k)$ strings in each of \fat\ and $D^x$
thus the total space of the data structure is $O(k)$.

Let $i$ be the start of the skip interval of some vertex $v \in \td$
and define the \textit{pseudo-fat} numbers of $v$ to
be the set of 2-fattest numbers in the intervals $[i,p]$
where $i \leq p < \tfat{v}$.
We use Lemma~\ref{karp2} to find a fingerprinting function $\phi$ that
is collision-free for the strings in \fat\
and all length $l$ prefixes of the strings in $D$ where $l$ is
either a pseudo-fat number in the skip interval of some vertex $v \in \td$
or a multiple of $x$.

Observe that the range of strings in $D$ that are prefixed by some pattern $P$ of length $m$ is
exactly $[l_v, r_v]$ where $v = \locus{P}$.
Answering a weak prefix search query for $P$ is comprised
by two independent steps. First step is to find a vertex $v \in \td$ such that
\str{v} is a prefix of $P$ and $m - \length{v} \leq x$. We say that $v$ is in \emph{$x$-range of $P$}.
Next step is to apply a slightly modified version of the search technique from Belazzougui
et al.~\cite{BelazzouguiBPV10}
to find the \textit{exit vertex} for $P$, that is, the deepest vertex $v_e \in \td$ such
that $\str{v_e}$ is a prefix of $P$.
Having found the exit vertex we can find the locus in constant time
as it is either the exit vertex itself or one of its children.

\subsection{Finding an \texorpdfstring{$x$}{x}-range Vertex}

We now describe how to find a vertex in $x$-range of $P$. 
If $m < x$ we simply report that the root of $\td$ is in $x$-range of $P$.
Otherwise, let $v$ be the root of $\td$ and for $i = 1, 2, \ldots \floor{m/x}$
we check if $ix > \length{v}$ and $\phi(P[1,ix])$ is in $\mathcal{H}$
in which case we update $v$ to be the corresponding vertex.
Finally, if $\length{v} \geq m$ we report that $v$ is $\locus{P}$
and otherwise we report that $v$ is in $x$-range of $P$.
In the former case, we report $[l_v, r_v]$
as the range of strings in $D$ prefixed by $P$. In the latter case we pass on $v$ to the next step
of the algorithm.

We now show that the algorithm is correct when $P$ prefixes a string in $D$. It is easy to verify that the $x$-prefix of $v$ prefixes $P$ at all time during the execution of the algorithm.
Assume that $\length{v} \geq m$ by the end of the algorithm. We will show that in that case $v = \locus{P}$, i.e., that $v$ is the highest vertex prefixed by $P$.  Since $P$ prefixes a string in $D$, the $x$-prefix of $v$ prefixes $P$, and $\length{v} \geq m$, then $P$ prefixes $v$. Since  the $x$-prefix of $v$ prefixes $P$, $P$ does not prefix the parent of $v$ and thus  $v$ is the highest vertex prefixed by $P$.

Assume now that $\length{v}<m$. We will show that  $v$ is in $x$-range of $P$. Since $P$ prefixes a string in $D$ and the $x$-prefix of $v$ prefixes $P$, then $\str{v}$ prefixes $P$. Let $P[1,ix]$ be the $x$-prefix of $v$. Since $v$ is returned,  either $\phi(P[1,jx])\not\in \mathcal{H}$  or $jx \leq \length{v}$ for all $i<j \leq \floor{m/x}$. If $\phi(P[1,jx])\not\in \mathcal{H}$ then $P[1,jx]$ is not a $x$-prefix of any vertex in $\td$.  Since $P$ prefixes a string in $D$ this implies that $jx$ is in the skip interval of $v$, i.e., $jx \leq \length{v}$.
This means that  $jx \leq \length{v}$ for all $i<j \leq \floor{m/x}$.  Therefore  $\floor{m/x}x\leq \length{v} < m$ and it  follows that $m - \length{v} < x$. We already proved that  $\str{v}$ prefixes $P$  and therefore $v$ is in $x$-range of~$P$.

In case $P$ does not prefix any string in $D$
we either report that $v = \locus{P}$ even though $\locus{P} = \bot$
or report that $v$ is in $x$-range of $P$ because $m - \length{v} \leq x$  even though $\str{v}$
is not a prefix of $P$ due to fingerprint collisions.
This may lead to a false positive. 
However, false positives are allowed 
in the weak prefix search problem.

Given that we can compute the fingerprint of substrings of $P$ in constant time
the algorithm uses $O(m/x)$ time. 

\subsection{From \texorpdfstring{$x$}{x}-range to Exit Vertex}

We now consider how to find the exit vertex of $P$ hereafter denoted $v_e$.
The algorithm is similar to the one presented in Belazzougui et al.~\cite{BelazzouguiBPV10} except that
we support starting the search from not only the root, but from any ancestor of $v_e$.

Let $v$ be any ancestor of $v_e$, let $y$ be the smallest power of two greater than $m - \length{v}$
and let $z$ be the largest multiple of $y$ no greater than $\length{v}$.
The search progresses by iteratively halving the search interval
while using $\mathcal{G}$ to maintain a candidate for the exit vertex and
to decide in which of the two halves to continue the search.

Let $v_c$ be the candidate for the exit vertex and let $l$ and $r$ be the left and right boundary for our search interval.
Initially  $v_c = v$,  $l = z$ and $r = z + 2y$.
When $r-l = 1$, the search terminates and reports $v_c$. In each iteration, we consider
the mid $b = (l+r)/2$ of the interval $[l, r]$ and update the interval to either $[b, r]$ or $[l, b]$.
There are three cases: 
\begin{enumerate}
    \item $b$ is out of bounds
        \begin{enumerate}
            \item If $b > m$ set $r$ to $b$.
            \item If $b \leq \length{v_c}$ set $l$ to $b$.
        \end{enumerate}

    \item $P[1,b] \in \fat$, let $u$ be the corresponding vertex, i.e.\ $\mathcal{G}(\phi(P[1,b])) = u$.
        \begin{enumerate}
            \item If  $\length{u} < m$, set $v_c$ to $u$ and $l$ to $b$.
            \item If $\length{u} \geq m$, report $u = \locus{P}$ and terminate.
        \end{enumerate}

    \item $P[1,b] \notin \fat$ and thus $\phi(P[1,b])$ is not in $\mathcal{G}$, set $r$ to $b$.

\end{enumerate}

Observe that we are guaranteed that all fingerprint comparisons are collision-free in case
$P$ prefixes a string in $D$.
This is because the length of the prefix fingerprints we consider are all either 2-fattest or pseudo-fat
in the skip interval of $\locus{P}$ or one of its ancestors and we use a fingerprinting function
that is collision-free for these strings.

\subsubsection{Correctness}
We now show that the invariant $l \leq \length{v_c} \leq \length{v_e} < r$ is satisfied and that
$\str{v_c}$ is a prefix of $P$ before and after each iteration.
After $O(\lg x)$ iterations $r-l = 1$ and thus $l = \length{v_e} = \length{v_c}$ and therefore $v_c = v_e$.
Initially $v_c$ is an ancestor of $v_e$ and thus $\str{v_c}$ is a prefix of $P$,
$l = z \leq \length{v_c}$ and $r = z + 2y > m > \length{v_e}$ so the invariant is true.
Now assume that the invariant is true at the beginning of some iteration and consider the possible cases:
\begin{enumerate}
    \item $b$ is out of bounds
        \begin{enumerate}
            \item $b > m$ then because $\length{v_e} \leq m$, setting $r$ to $b$
        preserves the invariant.
            \item $b \leq \length{v_c}$ then setting $l$ to $b$ preserves the invariant.
        \end{enumerate}

    \item $P[1, b] \in \fat$, let $u = \mathcal{G}(\phi(P[1,b]))$.
        \begin{enumerate}
            \item $\length{u} < m$
                then $\str{u}$ is a prefix of $P$ and thus $b = \tfat{u} \leq \length{u} \leq \length{v_e}$
                so setting $l$ to $b$ and $v_c$ to $u$ preserves the invariant.

            \item $\length{u} \geq m$ yet $u = \mathcal{G}(\phi(P[1,b]))$.
                Then $u$ is the locus of $P$.
        \end{enumerate}

    \item $P[1, b] \notin \fat$, and thus $\phi(P[1,b])$ is not in $\mathcal{G}$.
        As we are not in any of the out of bounds cases we have $\length{v_c} < b < m$.
        Thus, either $b > \length{v_e}$ and setting $r$ to $b$ preserves the invariant.
        Otherwise $b \leq \length{v_e}$ and thus $b$ must be in the skip interval of some
        vertex $u$ on the path from $v_c$ to $v_e$ excluding $v_c$.
        But $\skipi{u}$ is entirely included in $(l, r)$ and because
        $b$ is 2-fattest in $(l,r)$\footnote{If $b-a = 2^i$, $i > 0$ and $a$ is a multiple of $2^{i-1}$
            then the mid of the interval $(a + b)/2$ is 2-fattest in $(a,b)$.
        } it is also 2-fattest in $\skipi{u}$.
        It follows that $\tfat{u} = b$ which contradicts $P[1, b] \notin \fat$
        and thus the invariant is preserved.
\end{enumerate}

Thus if $P$ prefixes a string in $D$ we find either the exit vertex $v_e$ or the locus of $P$.
In the former case the locus of $P$ is the child of $v_e$ identified by the character $P[\length{v_e} + 1]$.
Having found the vertex $u = \locus{P}$ we report $[l_{u}, r_{u}]$ as the range of strings in $D$ prefixed by $P$.
In case $P$ does not prefix any strings in $D$, the fact that the fingerprint
of a prefix of $P$ match the fingerprint of some fat prefix in $D^x$
does not guarantee equality of the strings.
There are two possible consequences of this. Either the search successfully
finds what it believes to be the locus of $P$ even though $\locus{P} = \bot$
in which case we report a false positive. Otherwise,
there is no child identified by $P[\length{v_e} + 1]$ in which case we can correctly report
that no strings in $D$ are prefixed by $S$, a true negative.
Recall that false positives are allowed as we are considering the weak prefix search problem.

\subsubsection{Complexity}
The size of the interval $[l, r]$ is halved in each iteration, thus
we do at most $O(\lg(m - \length{v}))$ iterations, where $v$ is the
vertex from which we start the search.
If we use the technique from the previous section to find a starting vertex
in $x$-range of $P$, we do $O(\lg x)$ iterations.
Each iteration takes constant time.
Note that if $P$ does not prefix a string in $D$ we may have fingerprint collisions
and we may be given a starting vertex $v$
such that $\str{v}$ does not prefix $P$.
This can lead to a false positive, but we still have $m - \length{v} \leq x$ and therefore the time complexity remains $O(\lg x)$.

\subsection{Multiple Substrings}
In order to answer weak prefix search queries for $h$ substrings
of a pattern $P$ of length $m$, we first preprocess $P$ in $O(m)$ time such that we can compute the
fingerprint of any substring of $P$ in constant time using Lemma~\ref{karp1}.
We can then answer a weak prefix search query for any substring of $P$
in total time $O(m/x + \lg x)$ using the techniques described in the previous sections.
The total time is therefore $O(m + h(m/x + \lg x))$. 

\section{Distinguishing Occurrences}

\label{section:pmlz77}

The following sections describe our compressed-index consisting of three independent data structures.
One that finds long primary occurrences, one that finds short primary occurrences
and one that finds secondary occurrences.

Let $Z$ be the LZ77 parse of length $z$ representing the string $S$ of length $n$.
If $S[i,j]$ is a phrase of $Z$ then any substring of $S[i,j-1]$ is a \textit{secondary substring} of $S$.
These are the substrings of $S$ that do not contain any phrase borders.
On the other hand, a substring $S[i', j']$ is a \textit{primary substring} of $S$
when there is some phrase $S[i,j]$ where $i \leq i' \leq j \leq j'$,
these are the substrings that contain one or more phrase borders. Any substring of $S$ is either primary or secondary.
A primary substring that matches a query pattern $P$ is a \textit{primary occurrence} of $P$ while a secondary substring that
matches $P$ is a \textit{secondary occurrence}~\cite{Karkkainen95lempel-zivparsing}.

\section{Long Primary Occurrences}

For simplicity, we assume that the data structure given in Lemma~\ref{fast-fat}
not only solves the weak prefix problem, but also answers correctly when
the query pattern does not prefix any of the indexed strings.
Later in Section~\ref{verification} we will see how to lift this assumption.
The following data structure and search algorithm
is a variation of the classical bidirectional search technique for finding primary occurrences~\cite{Karkkainen95lempel-zivparsing}.

\subsection{Data Structure}
\label{main-ds}

Let $\tau$ be a fixed positive integer parameter (its value will be determined later).
For every phrase $S[i,j]$, we consider the strings $S[i,j+k], 0 \leq k < \tau$
\textit{relevant substrings} of $S$ unless there is some longer relevant
substring ending at position $j+k$.
If $S[i',j']$ is a relevant substring then the string $S[j'+1, n]$ is the \textit{associated suffix}.
There are at most $z\tau$ relevant substrings of $S$ and equally many associated suffixes.
The primary index is comprised by the following:

\begin{itemize}
    \item A prefix search data structure \td\ 
    on the set of reversed relevant substrings.
    \item A prefix search data structure \tdp\ 
    on the set of associated suffixes.
    \item An orthogonal range reporting data structure $R$  on the $z\tau \times z\tau$ grid.
        Consider a relevant substring $S[i,j]$.
        Let $x$ denote the rank of $\rev{S[i,j]}$ in the lexicographical order of the reversed relevant substrings,
        let $y$ denote the rank of its associated suffix $S[j+1,n]$ in the lexicographical order
        of the associated suffixes.
        Then $(x,y)$ is a point in $R$ and along with it we store the pair $(j,b)$, where $b$ is the position of the rightmost phrase border contained in $S[i,j]$. 
\end{itemize}

Note that every point $(x,y)$ in $R$ is induced by some relevant substring $S[i, j]$ and its associated suffix $S[j+1, n]$.
If some prefix $P[1, k]$ is a suffix of $S[i,j]$ and the suffix $P[k+1, m]$ is a prefix of $S[j+1,n]$
then $S[j - k + 1, j - k + m]$ is an occurrence of $P$ and we can compute its exact location from $k$ and $j$.

\subsection{Searching}
\label{primary}

The data structure can be used to find the primary occurrences of a pattern $P$ of length $m$ when $m > \tau$.
Consider the $O(m/\tau)$ prefix-suffix pairs $(P[1,i\tau], P[i\tau+1, m])$ for $i = 1, \ldots ,\floor{m/\tau}$
and the pair $(P[1,m], \epsilon)$ in case $m$ is not a multiple of $\tau$.
For each such pair, we do a prefix search for \rev{P[1,i\tau]} and $P[i\tau+1, m]$
in $\td$ and $\tdp$, respectively. If either of these two searches report no matches, we move on to the next pair.
Otherwise, let $[l,r]$, $[l',r']$ be the ranges reported from the search
in $\td$ and $\tdp$ respectively. Now we do a range reporting query on $R$
for the rectangle $[l,r] \times [l',r']$. For each point reported, let $(j, b)$ be the pair stored with the point.
We report $j-i\tau + 1$ as the starting position of a primary occurrence of $P$ in $S$.

Finally,  in case $m$ is not a multiple of $\tau$, we need to also check the pair $(P[1,m], \epsilon)$. We search for  \rev{P[1,m]} in
in $\td$ and  $\epsilon$ in $\tdp$. If the search for \rev{P[1,m]}  reports  no match we stop. Otherwise, we do a range reporting query as before. For each point reported, let $(j, b)$ be the pair stored with the point. To check that the occurrence has not been reported before we do as follows.
Let $k$ be the smallest positive integer such that $j-m+k\tau>b$. Only if $k\tau> m$ we report $j-m+1$ as the  starting position of a primary occurrence. 

\subsubsection{Correctness}
We claim that the reported occurrences are exactly the primary occurrences of $P$.
We first prove that all primary occurrences are reported correctly. Let $P = S[i',j']$ be a primary occurrence.
As it is a primary occurrence, there must be some phrase $S[i^*, j^*]$ such that $i^* \leq i' \leq j^* \leq j'$.
Let $k$ be the smallest positive integer such that $i' + k\tau - 1 \geq j^*$.
There are two cases: $k\tau \leq m$ and $k\tau >m$. If $k\tau \leq m$ then $P[1, k\tau]$ is  a suffix of the relevant substring ending at $i'+k\tau-1$. Such a relevant substring exists since $i'+k\tau-1 < j^*+\tau$.  Thus its reverse $\rev{P[1, k\tau]}$
prefixes a string $s$ in $D$, while $P[k\tau + 1, m]$ is a prefix of the associated suffix $S[i'+k\tau, n] \in D'$.
Therefore, the  respective ranks of $s$ and $S[i'+k\tau, n] $ in $D$ and $D'$ are plotted as a point in $R$ which stores the pair $(i' + k\tau - 1,b)$.
We will find this point when considering the prefix-suffix pair $(P[1, k\tau]$, $P[k\tau + 1, m])$,
 and correctly report $(i'+k\tau - 1) - k\tau + 1 = i'$ as the starting position of a primary occurrence.
If $k\tau >m$ then $P[1,m]$ is a suffix of the relevant substring ending in $i'+m-1$. Such a relevant substring exists since $i' + m - 1 < i'+k\tau-1 < j^*+\tau$.  Thus its reverse prefixes a string in $D$ and trivially $\epsilon$ is a prefix of the associated suffix. It follows as before that the ranks are plotted as a point in $R$ storing the pair $(i' + m - 1,b)$ and that we find this point when considering the pair $(P[1, m], \epsilon)$. When considering $(P[1, m], \epsilon)$ we report  $(i'+m - 1) - m + 1 = i'$ as the starting position of a primary occurrence if $k\tau > m$, and thus $i'$ is correctly reported.

We now prove that all reported occurrences are in fact primary occurrences. Assume that we report $j-i\tau + 1$ for some $i$ and $j$ as the starting position of a primary occurrence in the first part of the procedure.
Then there exist strings $\mathrm{rev}(S[i', j])$ and $S[j+1, n]$ in $D$ and $D'$ respectively
such that $S[i', j]$ is suffixed by $P[1,i\tau]$ and $S[j+1, n]$ is prefixed by $P[i\tau + 1, m]$.
Therefore $j-i\tau + 1$ is the starting position of an occurrence of $P$.
The string $S[i',j]$ is a relevant suffix and therefore there exists a border $b$ in the interval $[j-\tau +1,j]$. Since $i\geq 1$  the occurrence contains the border $b$ and it is therefore a primary occurrence. If we report  $j-m + 1$ for some $j$ as the starting position of a primary occurrence in the second part of the procedure, then $\mathrm{rev}(P[1, m])$ is a prefix of a string $\rev{S[i',j]}$ in $D$. It follows immediately that $j-m+1$ is the starting point of an occurrence. Since $m>\tau$ we have $j-m+1 < j-\tau +1$, and by the definition of relevant substring there is a border in the interval $[j-\tau+1,j]$. Therefore the occurrence contains the border and is primary.

\subsubsection{Complexity}
We now consider the time complexity of the algorithm described.
First we will argue that any primary occurrence is reported at most once and that the search finds at most two points in $R$ identifying it.
Let $S[i',j']$ be a primary occurrence reported when we considered the prefix-suffix pair $(P[1, k\tau]$, $P[k\tau + 1, m])$ as in the proof of correctness.
Recall that there is some phrase $S[i^*, j^*]$ such that $i^* \leq i' \leq j^* \leq j'$ and again let $k$ be the smallest positive integer such that $i' + k\tau - 1 \geq j^*$.
None of the pairs $(P[1,h\tau], P[h\tau + 1, m])$, where $1 \le h < k$ will identify this occurrence as 
the reverse of $P[1,h\tau]$ does not prefix the reverse of any relevant substring when $i^* \le i' \le i'+h\tau -1<j^*$ which is true when $h < k$.
None of the pairs $(P[1,h\tau], P[h\tau + 1, m])$, where $h > k$, will identify this occurrence. This is the case since $i' + h\tau - 1 > j^*+\tau - 1$ whenever $h > k$, and from the definition of relevant substrings it follows that if $S[i^*,j^*]$ is a phrase, $S[a, b]$ is a relevant substring and $a < i^*$, then $b < i^* + \tau - 1$. Thus there are no relevant substrings that end after $j^* + \tau - 1$ and start before $i' < j^*$.
Therefore, only one of the pairs $(P[1, h\tau], P[h\tau + 1, m])$ for $h = 1, \ldots \floor{m/\tau}$ identifies the occurrence.
If $(k+1)\tau > m$ then we might also find the occurrence when considering the pair  $(P[1, m], \epsilon)$, but we do not report $i'$ as $k\tau\leq m$.

After preprocessing $P$ in $O(m)$ time,
we can do the $O(m/\tau)$ prefix searches in total time $O(m + m/\tau(m/x + \lg x))$ where $x$ is a positive integer by Lemma~\ref{fast-fat}.
Using the range reporting data structure by Chan~et~al.~\cite{Larsen11} each range reporting query takes
$(1+k)\cdot O(B\lg \lg (z\tau))$ time where $2 \leq B \leq \lg^\epsilon (z\tau)$ and $k$ is the number of points reported.
As each such point in one range reporting query corresponds to the identification of a unique primary  occurrence of $P$,
which happens at most twice for every occurrence we charge $O(kB\lg\lg(z\tau))$ to reporting the occurrences.
The total time to find all primary occurrences is thus $O(m + \frac{m}{\tau}(\frac{m}{x} + \lg x + B\lg\lg(z\tau)) + \occ\  B\lg\lg(z\tau))$
where $\occ$ is the number of primary and secondary occurrences of $P$.

\subsection{Prefix Search Verification}
\label{verification}

The prefix data structure from Lemma~\ref{fast-fat}
gives no guarantees of correct answers when the query pattern does not prefix any of the indexed strings.
If the prefix search gives false-positives, we may end up reporting occurrences of $P$
that are not actually there. We show how to solve this problem after introducing a series of tools that we will need.

\subsubsection{Straight Line Programs}

\label{slp}

A \textit{straight line program} (SLP) for a string $S$ is a
context-free grammar generating the single string $S$.

\begin{lemma}[Rytter~\cite{Rytter}, Charikar~et~al.~\cite{Charikar}]
\label{rytter}
        Given an LZ77 parse $Z$ of length $z$ producing
        a string $S$ of length $n$ we can construct
        a SLP for $S$ of size $O(z \lg (n/z))$ in time
        $O(z\lg (n/z))$.
\end{lemma}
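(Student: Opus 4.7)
The plan is to maintain an AVL-balanced SLP incrementally while scanning the LZ77 parse left to right, so that after processing the first $i$ phrases there is a designated nonterminal $A_i$ with $\str{A_i}=Z[1]\cdots Z[i]$, and the derivation tree of every nonterminal $B$ in the grammar has height $O(\lg|\str{B}|)$. Under this invariant the standard tree-balancing operations on AVL grammars — appending a symbol, concatenating two grammars, and splitting a grammar at a prescribed position — can each be carried out while introducing only $O(\lg n)$ new nonterminals and in time proportional to that number, by the usual rotation-based arguments.

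To process phrase $Z[i+1]$ there are two cases. A literal phrase $(0,0,c)$ is handled by AVL-concatenating $A_i$ with a fresh nonterminal producing $c$. A reference phrase $(\textit{start},\textit{len},c)$ requires appending $\zso{Z[i+1]}=S[\textit{start},\textit{start}+\textit{len}-1]$ followed by $c$. Walking the derivation tree of $A_i$ along the two root-to-leaf paths bounding this interval exposes a canonical decomposition of $\zso{Z[i+1]}$ into $O(\lg n)$ maximal already-existing nonterminals whose concatenation equals the desired substring; we then merge these pairwise by AVL-concatenation and concatenate the result with $A_i$ and with the literal $c$ to obtain $A_{i+1}$.

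A direct analysis bounds both the total size and construction time by $O(z\lg n)$, since each phrase performs $O(\lg n)$ AVL operations, each producing $O(\lg n)$ nonterminals. The main obstacle, and what I expect to require the most care to fill in, is sharpening the $\lg n$ factor to $\lg(n/z)$. For this one observes that AVL-concatenation of two trees of heights $h_1 \le h_2$ costs only $O(h_2 - h_1)$ rather than $O(h_2)$, so the nonterminals created while processing $Z[i+1]$ can be charged in a weight-balanced fashion of $O(\lg|Z[i+1]|+1)$ per phrase rather than $O(\lg n)$. Summing this charge over the $z$ phrases and applying concavity of $\lg$ under the constraint $\sum_{i=1}^{z}|Z[i]|=n$ gives $\sum_i O(\lg|Z[i]|+1) = O(z\lg(n/z))$, which matches the lemma. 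Making the AVL analysis sensitive to the inserted length rather than the total length — rather than losing a factor of $\lg n$ in every single operation — is the delicate point, and is where the arguments of Rytter and Charikar et al.\ do their real work.
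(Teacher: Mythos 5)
The AVL-grammar framework you describe is indeed the engine behind Rytter's construction, and you correctly identify that the crux is replacing the naive $\lg n$ factor by $\lg(n/z)$. However, the way you propose to close that gap does not work, and it differs essentially from what Rytter (and Charikar et al.) actually do. Your argument charges $O(\lg|Z[i+1]|+1)$ per phrase on the grounds that AVL-join of trees of heights $h_1\le h_2$ costs $O(h_2-h_1)$; but in the single-grammar scheme you maintain, appending a grammar for $\zso{Z[i+1]}$ of height $O(\lg|Z[i+1]|)$ to the running grammar $A_i$, whose height is $\Theta(\lg|S[1,j_i]|)$ and hence $\Theta(\lg n)$ towards the end of the scan, is exactly a join with $h_2-h_1 = \Theta(\lg n - \lg|Z[i+1]|)$. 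Since the grammar is persistent (old nonterminals must survive because they are referenced by later phrases), every node on the right spine from the splice point up to the root of $A_i$ must be freshly allocated, so this term really is paid. Adding back your $O(\lg|Z[i+1]|)$ for building the source grammar, the per-phrase cost is $\Theta(\lg n)$, and summing gives only $O(z\lg n)$. Jensen's inequality on $\sum_i \lg|Z[i]|$ is the right bound for the wrong quantity: it controls the sizes of the spliced-in pieces, not the height of the tree you are splicing into.

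What the cited construction does instead, and what the text surrounding the lemma sketches, is a two-step structural fix that you do not mention. First, preprocess the parse so that no phrase (and hence no source) has length exceeding $n/z$; this at most doubles the number of phrases. Second, do not maintain one global AVL grammar: partition $S$ into $z$ consecutive blocks of length $n/z$ and build a separate balanced grammar for each block. Because every phrase and every source now lives inside at most two consecutive blocks, both the decomposition of a source and the concatenation into the current block operate on trees of height $O(\lg(n/z))$, so each of the $O(z)$ phrases contributes $O(\lg(n/z))$ nonterminals and $O(\lg(n/z))$ work, with no $\lg n$ term to amortize away. A final $O(z)$ nonterminals glue the block grammars into one SLP of height $O(\lg n)$. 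Without the length cap and the block partition, there is no way for your per-phrase accounting to avoid the height of the whole tree, so the bound in the lemma does not follow from the steps you give.
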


The construction from Rytter~\cite{Rytter} produces a balanced grammar for every consecutive
substring of length $n/z$ of $S$ after a preprocessing step transforms $Z$ such that no compression element
is longer than $n/z$.
These grammars are then connected to form a single balanced grammar of height $O(\lg n)$ which
immediately yields extraction of any substring $S[i,j]$ in time $O(\lg(n) + j-i)$.
We give a simple solution to reduce this to $O(\lg(n/z) + j - i)$,
that also supports computation of the fingerprint of a substring in
$O(\lg(n/z))$ time.

\begin{lemma}
\label{small-slp}
    Given an LZ77 parse $Z$ of length $z$ producing a string
    $S$ of length $n$ we can build a data structure that for any substring $S[i,j]$
    can extract $S[i, j]$ in $O(\lg(n/z) + j - i)$
    time and compute the fingerprint $\phi(S[i,j])$ in $O(\lg(n/z))$ time.
    The data structure uses $O(z\lg(n/z))$ space and $O(n)$ construction time.
\end{lemma}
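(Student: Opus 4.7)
\medskip

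\noindent\textbf{Proof proposal.} The plan is to start from Rytter's construction (Lemma~\ref{rytter}), which after its preprocessing step produces an SLP whose structure has two layers: $z$ balanced subgrammars $G_1,\dots,G_z$ deriving the consecutive blocks $B_k = S[(k-1)(n/z){+}1,\,k(n/z)]$, and a balanced top-level grammar of height $O(\lg z)$ that concatenates them. The key observation is that each subgrammar $G_k$ derives a string of length at most $n/z$ and, being balanced, has height $O(\lg(n/z))$. Standard top-down extraction from the overall root costs $O(\lg n) = O(\lg z + \lg(n/z))$, but we can shave off the $O(\lg z)$ term by providing direct access into the $G_k$'s.

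First I would augment the SLP with (a) an array $A[1..z]$ where $A[k]$ is a pointer to the root of $G_k$, (b) at every SLP node $v$, the length and Karp--Rabin fingerprint of the string $\str{v}$ it derives, and (c) a prefix-fingerprint table $F[k] = \phi(B_1B_2\cdots B_k)$ for $k=0,\dots,z$. The SLP itself has $O(z\lg(n/z))$ nodes so (b) fits in $O(z\lg(n/z))$ space, while $A$ and $F$ add $O(z)$; all can be filled in $O(z\lg(n/z)) = O(n)$ time (using $\lg(n/z)\le n/z$ for the last inequality) via a single bottom-up pass for node fingerprints and an $O(z)$ scan combining block fingerprints through Lemma~\ref{karp1}.

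For \emph{extraction} of $S[i,j]$, I locate in $O(1)$ arithmetic the blocks $k_1,k_2$ containing $i,j$, then: descend from $A[k_1]$ to extract the suffix of $B_{k_1}$ starting at position $i$, fully traverse $G_{k_1+1},\dots,G_{k_2-1}$ to output those blocks verbatim, and descend from $A[k_2]$ to extract the prefix of $B_{k_2}$ ending at position $j$. The two boundary descents cost $O(\lg(n/z) + \ell_1)$ and $O(\lg(n/z) + \ell_2)$ respectively, where $\ell_1,\ell_2$ are the emitted partial lengths; the middle traversals cost $O(n/z)$ per block, i.e.\ $O(j-i)$ total since the middle blocks are entirely contained in $[i,j]$. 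Hence the total is $O(\lg(n/z) + j - i)$.

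For the \emph{fingerprint} query, I compute $\phi(S[i,j])$ by combining three pieces through Lemma~\ref{karp1}: the fingerprint of the suffix of $B_{k_1}$ starting at position $i$, the fingerprint of $B_{k_1+1}\cdots B_{k_2-1}$, and the fingerprint of the prefix of $B_{k_2}$ ending at $j$. The middle piece comes from $F[k_2-1]$ and $F[k_1]$ in $O(1)$. For the boundary pieces I use the standard SLP prefix-fingerprint descent: starting at $A[k_1]$ (resp.\ $A[k_2]$), walk toward the split position, and whenever the current node's left (resp.\ right) subtree lies entirely inside the desired range, fold in its stored fingerprint via Lemma~\ref{karp1} and recurse into the other child. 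Because each $G_k$ has height $O(\lg(n/z))$, each boundary descent visits $O(\lg(n/z))$ nodes, so the overall fingerprint cost is $O(\lg(n/z))$. The case $k_1=k_2$ is handled analogously by computing two prefix fingerprints inside $G_{k_1}$ and subtracting. The main subtlety to verify carefully is that Rytter's construction indeed yields subgrammars of height $O(\lg(n/z))$ on each $(n/z)$-block rather than only $O(\lg n)$ globally, so that the $O(\lg z)$ top-level factor is fully bypassed by the array $A$; everything else is a routine composition of balanced-SLP navigation with Karp--Rabin arithmetic.
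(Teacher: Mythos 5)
Your proposal is correct and follows essentially the same construction as the paper: use Rytter's two-layer structure to get $z$ balanced block-grammars of height $O(\lg(n/z))$, store per-node sizes and fingerprints plus per-block pointers and cumulative prefix fingerprints, and answer queries by splitting at block boundaries. The only cosmetic differences are that the paper folds your two arrays $A$ and $F$ into one table, and phrases the fingerprint query as a prefix-fingerprint subtraction $\phi(S[1,j])$ minus $\phi(S[1,i-1])$ rather than your three-piece composition, both of which are equivalent.
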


\begin{proof}
Assume for simplicity that $n$ is a multiple of $z$.
We construct the SLP producing $S$ from $Z$.
Along with every non-terminal of the SLP we store the size and fingerprint of its expansion.
Let $s_1, s_2, \ldots s_z$ be consecutive length $n/z$ substrings of $S$.
We store the balanced grammar producing $s_i$ along with the fingerprint $\phi(S[1, (i-1)n/z])$ at index $i$ in a table~$A$.

Now we can extract $s_i$ in $O(n/z)$ time and any substring $s_i[j,k]$ in time $O(\lg(n/z) + k-j)$.
Also, we can compute the fingerprint $\phi(s_i[j, k])$ in $O(\lg(n/z))$ time.
We can easily do a constant time mapping from a position in $S$
to the grammar in $A$ producing the substring covering that position
and the corresponding position inside the substring.
But then any fingerprint $\phi(S[1, j])$ can be computed in time $O(\lg(n/z))$.
Now consider a substring $S[i,j]$ that starts in $s_k$ and ends in $s_l, k < l$.
We extract $S[i, j]$ in $O(\lg(n/z) + j-i)$ time by extracting
the appropriate suffix of $s_k$, all of $s_m$ for $k < m < l$ and the appropriate prefix of $s_l$.
Each of the fingerprints stored by the data structure can be computed in $O(1)$
time after preprocessing $S$ in $O(n)$ time. Thus table $A$ is filled in $O(z)$ time
and by Lemma~\ref{rytter} the SLPs stored in $A$ use a total of $O(z\lg(n/z))$ space and construction time.
\end{proof}

\subsubsection{Verification of Fingerprints} We need the following lemma for the verification.
\begin{lemma}[Bille et al.~\cite{Bille2012}]
\label{2power}
    Given a string $S$ of length $n$, we can find a fingerprinting function $\phi$
    in $O(n\lg n)$ expected time such that 
    \[\phi(S[i, i+2^l]) = \phi(S[j, j+2^l]) 
    \text{ iff } S[i,i+2^l] = S[j, j+2^l] \text{ for all } (i,j,l). \]

\end{lemma}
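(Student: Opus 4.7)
The plan is a Las Vegas sample-and-verify strategy. First I would enumerate the target set $\mathcal{S} = \{S[i, i+2^l] : 0 \leq l \leq \floor{\lg n},\, 1 \leq i \leq n-2^l\}$ of substrings whose fingerprints must be pairwise consistent. This set has size $O(n \lg n)$ and therefore contains $O(n^2 \lg^2 n)$ unordered pairs. I would then fix a prime $p = \Theta(n^c)$ for a sufficiently large constant $c$ (taking $c = 4$ suffices) and sample $r \in \mathbb{Z}_p$ uniformly at random, obtaining a candidate Karp--Rabin function $\phi$. By Lemma~\ref{karp2}, any fixed pair of distinct strings of length at most $n$ collides under $\phi$ with probability $O(1/n^{c-1})$, so a union bound over all pairs in $\mathcal{S}$ shows that the probability that $\phi$ fails to be collision-free on $\mathcal{S}$ is $O(\lg^2 n/n) \le 1/2$ for large $n$.

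The heart of the argument is verifying a given $\phi$ in $O(n \lg n)$ time. I would preprocess $S$ in $O(n)$ time by storing prefix fingerprints and the necessary powers of $r$ modulo $p$, so that $\phi$ of any substring is retrievable in $O(1)$ time, and additionally build a suffix array together with an LCP array and an RMQ structure over LCP on $S$ in $O(n)$ time (possible for integer alphabets polynomially bounded by $n$). With these, testing whether two substrings of equal length are identical reduces to an $O(1)$ LCP-based query. For each of the $O(\lg n)$ lengths $2^l + 1$, I would compute the fingerprints of all length-$(2^l + 1)$ substrings of $S$ in $O(n)$ time, bucket positions by fingerprint using a hash table in expected $O(n)$ time, choose one representative per bucket, and check each other element in the bucket against the representative using the $O(1)$ LCP-based equality test. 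If every test succeeds at every length, $\phi$ is collision-free on $\mathcal{S}$; otherwise, discard $\phi$ and resample $r$. Each trial costs $O(n \lg n)$ and the expected number of trials is $O(1)$, giving total expected time $O(n \lg n)$.

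The main obstacle will be the verification step: naively testing all $\Theta(n^2 \lg n)$ pairs blows the time budget, and one cannot test candidate equalities by brute comparison of characters either, since a single bucket of size $\Theta(n)$ with substrings of length $\Theta(n)$ already costs $\Theta(n^2)$. Bucketing by fingerprint and replacing each substring equality by a constant-time LCP query on the precomputed suffix array and LCP+RMQ structures is exactly what makes the per-trial verification run in $O(n \lg n)$ time; this in turn is the ingredient that lets a constant expected number of trials suffice to produce the desired collision-free~$\phi$.
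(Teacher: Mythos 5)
The paper does not prove this lemma; it is cited from Bille et al.~\cite{Bille2012}, so there is no internal proof to compare against. On its own terms, your Las Vegas sample-and-verify argument is correct: with $p=\Theta(n^4)$, Lemma~\ref{karp2} and a union bound over all same-length pairs make a uniformly random $r$ succeed with constant probability, and bucketing positions by fingerprint with a single representative comparison per bucket correctly certifies collision-freeness in $O(n\lg n)$ expected time per trial, so a constant expected number of trials suffices. Where your route departs from the cited source is in how substring equality is decided during verification. You build a suffix array with LCP and RMQ structures in $O(n)$ time to obtain $O(1)$ LCE queries --- valid here since the alphabet is integer and polynomially bounded, but it imports fairly heavy machinery. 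Bille et al.\ instead verify bottom-up over the $O(\lg n)$ levels: once $\phi$ has been certified collision-free on length-$(2^{l-1}+1)$ substrings, two length-$(2^l+1)$ substrings are equal iff the fingerprints of their two overlapping halves agree, so each level is certified in $O(n)$ time by sorting or hashing pairs of half-fingerprints, with no suffix structures at all. Both approaches give the stated $O(n\lg n)$ expected bound; yours is a clean reduction to LCE, theirs is more elementary and self-contained. A minor overcount in your write-up: only equal-length substrings need be compared, giving $O(n^2\lg n)$ pairs rather than $O(n^2\lg^2 n)$, though the union bound goes through either way.
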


\subsubsection{Verification Technique}\label{app:verification}
Our verification technique is identical to the one given by Gagie et al.~\cite{gagie2014lz77}
and involves a simple modification of the search for long primary occurrences.
By using Lemma~\ref{small-slp} instead of bookmarking~\cite{gagie2014lz77} for
extraction and fingerprinting and because we only need to verify $O(m/\tau)$ strings,
the verification procedure takes $O(m + (m/\tau) \lg(n/z))$ time
and uses $O(z\lg(n/z))$ space.

Consider the string $S$ of length $n$ that we wish to index and let $Z$ be the $LZ77$ parse of $S$.
The verification data structure is given by Lemma~\ref{small-slp}.
Consider the prefix search data structure $\tdp$ as given in Section~\ref{main-ds} and let $\phi$ be the fingerprinting function used
by the prefix search, the case for $\td$ is symmetric.
We alter the search for primary occurrences such that it first does the $O(m/\tau)$ prefix searches,
then verifies the results and discards false-positives before moving on to do the $O(m/\tau)$ range reporting queries
on the verified results. We also modify $\phi$ using Lemma~\ref{2power} to be collision-free for all
substrings of the indexed strings which length is a power of two.

Let $Q_1, Q_2, \ldots Q_j$
be all the suffixes of $P$ for which the prefix search found a locus candidate, let the candidates be
$v_1, v_2, \ldots v_j \in \tdp$ and let $p_i$ be  $\str{v_i}[1,|Q_i|]$.
Assume that $|Q_i| < |Q_{i+1}|$, and let \suf{Q} and \pref{Q} denote the fingerprints using $\phi$ of the suffix and prefix
respectively of length $2^{\floor{\lg |Q|}}$ of some string $Q$.
The verification progresses in iterations. Initially, let $a = 1$, $b = 2$
and for each iteration do as follows:
\begin{enumerate}
    \item $\suf{Q_a} \neq \suf{p_a}$ or  $\pref{Q_a} \neq \pref{p_a}$: Discard $v_a$ and set $a = a + 1$ and $b = b + 1$.
    \item $\suf{Q_a} = \suf{p_a}$ and $\pref{Q_a} = \pref{p_a}$, let $R = p_b[|p_b| - |p_a| + 1,|p_b|]$.
        \begin{enumerate}
            \item $\suf{R} = \suf{Q_a}$ and $\pref{R} = \pref{Q_a}$: set $a = a + 1$ and $b = b + 1$.
            \item $\suf{R} \neq \suf{Q_a}$ or $\pref{R} \neq \pref{Q_a}$: discard $v_b$ and set $b = b + 1$.
        \end{enumerate}

    \item $b = j + 1$: If all vertices have been discarded, report no matches.
        Otherwise, let $v_f$ be the last vertex in the sequence $v_1, \ldots, v_j$ that was not discarded.
        Report all non-discarded vertices $v_i$ where $|p_i|$ is no longer
        than the longest common suffix of $p_f$ and $Q_f$
        as verified and discard the rest.
\end{enumerate}

Consider the correctness and complexity of the algorithm.
In case 1, clearly, $p_a$ does not match $Q_a$ and thus $v_a$ must be a false-positive.
Now observe that because $Q_i$ is a suffix of $P$, it is also a suffix of $Q_{i'}$ for any $i < i'$.
Thus in case 2 (b), if $R$ does not match $Q_a$ then $v_b$ must be a false-positive.
In case 2 (a), both $v_a$ and $v_b$ may still be false-positives, yet
by Lemma~\ref{2power}, $p_a$ is a suffix of $p_b$ because $\suf{p_a} = \suf{R}$ and $\pref{p_a} = \pref{R}$.
Finally, in case $3$, $v_f$ is a true positive if and only if $p_f = Q_f$.
But any other non-discarded vertex $v_i \neq v_f$ is also only a true positive if $p_f$ and $Q_f$
share a length $|p_i|$ suffix because $p_i$ is a suffix of $p_f$ and $Q_i$ is a suffix of $Q_f$.

The algorithm does $j$ iterations and fingerprints of substrings of $P$
can be computed in constant time after $O(m)$ preprocessing.
Every vertex $v \in \tdp$ represents one or more substrings of $S$.
If we store the starting index in $S$ of one of these substrings in $v$ when constructing $\tdp$
we can compute the fingerprint of any substring $\str{v}[i,j]$ by computing the fingerprint of $S[i' + i - 1, i' + j - 1]$
where $i'$ is the starting index of one of the substrings of $S$ that $v$ represents.
By Lemma~\ref{small-slp}, the fingerprint computations take $O(\lg(n/z))$ time,
the longest common suffix of $p_f$ and $Q_f$ can be found in $O(m + \lg(n/z))$ time
and because $j \leq m/\tau$ the total time complexity of the algorithm is $O(m + (m/\tau) \lg(n/z))$.

\section{Short Primary Occurrences}
\label{short}

We now describe a simple data structure
that can find primary occurrences of $P$ in time $O(m + \occ)$ using space $O(z\tau)$
whenever $m \leq \tau$ where $\tau$ is a positive integer.

Let $Z$ be the LZ77 parse of the string $S$ of length $n$.
Let $Z[i] = S[s_i, e_i]$ and define $F$ to
be the union of the strings $S[k, \min\{e_i + \tau - 1, n\}]$ where $\max\{1, s_i, e_i -\tau + 1\} \leq k \leq e_i$ for $i = 1, 2, \ldots z$.
There are $O(z\tau)$ such strings, each of length $O(\tau)$
and they are all suffixes of the $z$ length $2\tau - 1$ substrings of $S$
starting $\tau - 1$ positions before each border position.
We store these substrings
along with the compact trie $\tf$ over the strings in $F$.
The edge labels of $\tf$ are compactly represented by
storing references into one of the substrings.
Every leaf stores the starting positions in $S$ of all the string it represents
and the positions of the leftmost borders these strings contain.

The combined size of $\tf$ and the substrings we store is $O(z\tau)$ and we simply
search for $P$ by navigating vertices using perfect hashing~\cite{Fredman} and matching
edge labels character by character.
Now either $\locus{P} = \bot$ in which case there are no primary occurrences of $P$ in $S$;
otherwise, $\locus{P} = v$ for some vertex $v \in \tf$ and thus every leaf in the subtree of $v$
represents a substring of $S$ that is prefixed by $P$.
By using the indices stored with the leaves, we can determine
the starting position for each occurrence and if it is primary or secondary.
Because each of the strings in $F$ start at different positions in $S$,
we will only find an occurrence once. Also, it is easy to see that
we will find all primary occurrences because of how the strings in $F$ are chosen.
It follows that the time complexity is $O(m + \occ)$ where $\occ$ is the number of primary and secondary occurrences.

\section{The Secondary Index}

Let $Z$ be the LZ77 parse of length $z$ representing the string $S$ of length $n$.
We find the secondary occurrences by
applying the most recent range reporting data structure by Chan~et~al.~\cite{Larsen11}
to the technique described by K\"{a}rkk\"{a}inen and Ukkonen~\cite{Karkkainen95lempel-zivparsing}
which is inspired by the ideas of Farach and Thorup~\cite{FarachT98}.

Let $o_1, \ldots o_{\occ}$ be the starting positions of the occurrences of $P$ in $S$ ordered increasingly.
Assume that $o_h$ is a secondary occurrence such that $P = S[o_h, o_h + m - 1]$. Then by definition,
$S[o_h, o_h + m - 1]$ is a substring the prefix $S[i, j-1]$ of some phrase $S[i,j]$
and there must be an occurrence of $P$ in the source of that phrase.
More precise, let $S[k, l] = S[i, j-1]$ be the source of the phrase $S[i,j]$
then $o_{h'} = k + o_h-i$ is an occurrence of $P$ for some $h' < h$.
We say that $o_{h'}$, which may be primary or secondary, is the source occurrence of the secondary occurrence $o_h$
given the LZ77 parse of $S$. Thus every secondary occurrence has a source occurrence.
Note that it follows from the definition that no primary occurrence has a source occurrence.

We find the secondary occurrences as follows:
Build a range reporting data structure $Q$ on the
$n \times n$ grid and if $S[i,j]$ is a phrase with source $S[i',j']$ we plot a point $(i',j')$
and along with it we store the phrase start $i$.

Now for each primary occurrence $o$ found by the primary index, we query $Q$ for the rectangle $[0,o] \times [o + m -1, n]$.
The points returned are exactly the occurrences having $o$ as source.
For each point $(x,y)$ and phrase start $i$ reported,
we report an occurrence $o' = i + o - x$ and recurse on $o'$ to find all the occurrences having $o'$ as source.

Because no primary occurrence have a source, while all secondary occurrences have a source,
we will find exactly the secondary occurrences.

The range reporting structure $Q$ is built using Lemma~\ref{lemma:larsen}
with $B = 2$ and uses space $O(z\lg\lg z)$.
Exactly one range reporting query is done for each primary and secondary occurrence
each taking $O((1 + k)\lg \lg n)$ where $k$ is the number of points reported.
Each reported point identifies a secondary occurrence, so the total time is $O(\occ \lg \lg n)$.

\section{The Compressed Index}
\label{combining}
We obtain our final index by combining the primary index, the verification data structure and the secondary index.
We use a standard technique to guarantee that no phrase in the LZ77 parse is longer than $n/z$ when building our primary index,
see e.g.~\cite{Rytter, Charikar}.
Therefore any primary occurrence of $P$ will have a prefix $P[1,k]$ where $k \leq n/z$
that is a suffix of some phrase. It then follows that
we need only consider the multiples $(P[1,i\tau],P[i\tau+1, m])$ for $i < \floor{\frac{n/z}{\tau}}$ when
searching for long primary occurrences.
This yields the following complexities:
\begin{itemize}
    \item $O(m + \frac{\min\{m, n/z\}}{\tau}(\frac{m}{x} + \lg x + B\lg\lg(z\tau)) + \occ\ B\lg\lg (z\tau))$ time
        and $O(z\tau\lg_B\lg(z\tau))$ space for the index finding long primary occurrences
        where $x$ and $\tau$ are positive integers and $2 \leq B \leq \lg^\epsilon(z\tau)$.

    \item $O(m + \occ)$ time and $O(z\tau)$ space for the index finding short primary occurrences.

    \item $O(m + (m/\tau) \lg(n/z))$ time and $O(z\lg(n/z))$ space for the verification data structure.

    \item $O(\occ \lg\lg n)$ time and $O(z\lg\lg z)$ space for the secondary index.
 \end{itemize}
 If we fix $x$ at $n/z$ we have $\frac{\min\{m, n/z\}}{\tau}\frac{m}{x} \leq m$ in which case we obtain
 the following trade-off simply by combining the above complexities.

\begin{theorem}
\label{trade-off}
    Given a string $S$ of length $n$ from an alphabet of size $\sigma$, we can build a compressed-index supporting substring queries
    in $O(m + \frac{m}{\tau}(\lg(n/z) + B\lg\lg(z\tau)) + \occ(B\lg\lg(z\tau) + \lg\lg n))$
    time using $O(z(\lg(n/z) + \tau\lg_B\lg(z\tau) + \lg\lg z))$ space
    for any query pattern $P$ of length $m$ where $2 \leq B \leq \lg^\epsilon(z\tau)$ and $0 < \epsilon < 1$
    are constants, $\tau$ is a positive integer, $z$ is the number of phrases in the LZ77 parse of $S$
    and $\occ$ is the number of occurrences of $P$ in $S$.
\end{theorem}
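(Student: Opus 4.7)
The plan is to combine the four independent components already constructed—the long-primary index of Section 5, its verification from Section 5.3, the short-primary index of Section 6, and the secondary index of Section 7—behind a single front-end that dispatches on the pattern length. As a prerequisite I would apply the standard transformation that caps every phrase of the LZ77 parse at length $n/z$ while preserving $z$ up to constant factors (Rytter, Charikar). This has the crucial consequence that every primary occurrence contains a prefix of length at most $n/z$ which is a suffix of some phrase, so the long-primary search only needs to consider the prefix–suffix splits $(P[1,i\tau],P[i\tau+1,m])$ for $i < \floor{(n/z)/\tau}$, reducing the number of prefix and range-reporting queries from $\Theta(m/\tau)$ to $O(\min\{m,n/z\}/\tau)$.

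Next I would fix the weak-prefix parameter $x := n/z$ in the long-primary index. Substituting into the bound from Section~5 its cost becomes $O(m + \frac{\min\{m,n/z\}}{\tau}(\frac{m}{n/z} + \lg(n/z) + B\lg\lg(z\tau)) + \occ\, B\lg\lg(z\tau))$. The key algebraic step is observing that $\frac{\min\{m,n/z\}}{\tau}\cdot\frac{m}{n/z} \leq m$ in both regimes: if $m \leq n/z$ the product equals $\frac{m^2}{\tau(n/z)} \leq m$, and otherwise it equals $m/\tau \leq m$. Since also $\min\{m,n/z\}/\tau \leq m/\tau$, the long-primary cost simplifies to $O(m + \frac{m}{\tau}(\lg(n/z) + B\lg\lg(z\tau)) + \occ\, B\lg\lg(z\tau))$. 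The verification cost $O(m + (m/\tau)\lg(n/z))$ is absorbed, and the secondary index contributes the additive $O(\occ \lg\lg n)$. For the corner case $m \leq \tau$ I would instead dispatch to the short-primary index, whose $O(m+\occ)$ cost is already dominated. Summing yields the claimed query time.

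For the space bound I would simply add the four contributions: $O(z\tau\lg_B\lg(z\tau))$ for the long-primary index, $O(z\tau)$ for the short-primary index (absorbed since $\lg_B\lg(z\tau) \geq 1$), $O(z\lg(n/z))$ for the verification structure, and $O(z\lg\lg z)$ for the secondary index, totaling $O(z(\lg(n/z) + \tau\lg_B\lg(z\tau) + \lg\lg z))$.

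The only nontrivial step is the algebraic bound $\frac{\min\{m,n/z\}}{\tau}\cdot\frac{m}{n/z} \leq m$, and it is precisely the phrase-length cap at $n/z$ that enables it; without this cap the term $\frac{m^2}{\tau x}$ could dominate and the leading $O(m)$ in the query time would be lost. All remaining work is arithmetic on bounds already established in the preceding sections.
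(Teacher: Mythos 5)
Your proposal is correct and follows essentially the same route as the paper: cap phrase lengths at $n/z$, fix the weak-prefix parameter $x=n/z$, observe that $\frac{\min\{m,n/z\}}{\tau}\cdot\frac{m}{n/z}\leq m$, dispatch short patterns to the short-primary index, and add up the four components' time and space bounds. Your explicit two-case verification of the algebraic inequality and the remark that the short-primary space $O(z\tau)$ is absorbed by $O(z\tau\lg_B\lg(z\tau))$ are both sound elaborations of steps the paper states more tersely.
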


We note that none of our data structures assume constant sized alphabet
and thus Thm.~\ref{trade-off} holds for any alphabet size.

\subsection{Trade-offs}\label{app:tradeoff}

Thm.~\ref{trade-off} gives rise to a series of interesting time-space trade-offs.

\begin{corollary}
\label{theorem:second}
    Given a string $S$ of length $n$ from an alphabet of size $\sigma$ we can build a compressed-index supporting substring queries in
    \begin{enumerate}
        \item[(i)] $O(m(1 + \frac{\lg\lg z}{\lg(n/z)}) + \occ \lg\lg n)$ time using $O(z\lg(n/z)\lg\lg z)$ space, or
        \item[(ii)] $O(m(1 + \frac{\lg^\epsilon z}{\lg(n/z)}) + \occ(\lg\lg n + \lg^\epsilon z))$ time using $O(z(\lg(n/z) + \lg\lg z))$ space, or
        \item[(iii)] $O(m + \occ \lg\lg n)$ time using $O(z(\lg(n/z)\lg\lg z +  \lg^2\lg z))$ space, or
        \item[(iv)] $O(m + \occ(\lg\lg n + \lg^{\epsilon} z))$ time using $O(z(\lg(n/z) + \lg^{\epsilon} z))$ space.
    \end{enumerate}
    where $\epsilon > 0$ is an arbitrarily small constant.
\end{corollary}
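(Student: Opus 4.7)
The plan is to derive each of the four trade-offs directly from Theorem~\ref{trade-off} by choosing appropriate values for the two free parameters $\tau$ and $B$, subject to the constraints that $\tau$ is a positive integer and $2 \le B \le \lg^\epsilon(z\tau)$. Since Theorem~\ref{trade-off} already packages all three component indexes together, the only work left is parameter tuning plus routine simplification, using that $z\tau = O(n)$ (so $\lg\lg(z\tau) = O(\lg\lg n)$ and $\lg\lg(z\tau) = \Theta(\lg\lg z)$ whenever $z = n^{\Omega(1)}$; the pathological regime $z = n^{o(1)}$ is handled by the same bounds with room to spare).

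For (i), I would set $B = 2$ (a constant) and $\tau = \lg(n/z)$. The range-reporting factor collapses to $\lg_B \lg(z\tau) = O(\lg\lg z)$, so the space simplifies to $O(z(\lg(n/z) + \lg(n/z)\lg\lg z + \lg\lg z)) = O(z\lg(n/z)\lg\lg z)$, and the occurrence term becomes $O(\occ(\lg\lg z + \lg\lg n)) = O(\occ\lg\lg n)$. The pattern term becomes $m + \tfrac{m}{\lg(n/z)}(\lg(n/z) + \lg\lg z) = O(m(1 + \lg\lg z/\lg(n/z)))$. For (ii), I would keep $\tau = \lg(n/z)$ but set $B = \lg^{\epsilon/2}(z\tau)$, which satisfies $B \le \lg^\epsilon(z\tau)$ and gives $\lg_B\lg(z\tau) = O(1/\epsilon) = O(1)$. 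Then $B\lg\lg(z\tau) = O(\lg^{\epsilon/2}z \cdot \lg\lg z) = O(\lg^\epsilon z)$ after absorbing the $\lg\lg z$ factor by a tiny slackening of $\epsilon$ (and renaming). This gives the stated bounds.

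For (iii) and (iv) I would push $\tau$ upwards to kill the $m/\tau$ term entirely. For (iii), take $B = 2$ and $\tau = \lg(n/z) + \lg\lg z$, so that $\tfrac{m}{\tau}(\lg(n/z) + B\lg\lg(z\tau)) = O(m)$ and the occurrence term is $O(\occ\lg\lg n)$. The space becomes $O(z(\lg(n/z) + (\lg(n/z)+\lg\lg z)\lg\lg z + \lg\lg z)) = O(z(\lg(n/z)\lg\lg z + \lg^2\lg z))$. For (iv), take $B = \lg^{\epsilon/2}(z\tau)$ and $\tau = \lg(n/z) + \lg^\epsilon z$; then $B\lg\lg(z\tau) = O(\lg^\epsilon z) \le \tau$, so the $m/\tau$ term is again $O(m)$, the occurrence term is $O(\occ(\lg\lg n + \lg^\epsilon z))$, and the space is $O(z(\lg(n/z) + \tau + \lg\lg z)) = O(z(\lg(n/z) + \lg^\epsilon z))$.

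The only subtlety, and the one I would write out carefully, is the bookkeeping needed to absorb repeated $\lg\lg z$ factors into $\lg^\epsilon z$ by perturbing $\epsilon$; everything else is substitution. A secondary point to verify is that when $\tau$ is not an integer (e.g.\ $\lg(n/z)$) one takes $\lceil\tau\rceil$, which changes none of the asymptotics, and that the constraint $B \ge 2$ is automatically met in all four cases for $z$ and $n/z$ beyond a small constant (smaller instances are handled trivially). No genuine obstacle arises; the corollary is essentially a menu of parameter settings for Theorem~\ref{trade-off}.
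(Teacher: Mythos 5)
Your proposal is correct and takes essentially the same route as the paper: Corollary~\ref{theorem:second} is proved there by exactly this kind of parameter substitution into Theorem~\ref{trade-off}, with the paper choosing $B=2,\tau=\lg(n/z)$ for (i); $B=\lg^{\epsilon}z,\tau=\lg(n/z)$ for (ii); $B=2,\tau=\lg(n/z)+\lg\lg z$ for (iii); and $B=\lg^{\epsilon'}z$ with $\epsilon'<\epsilon$, $\tau=\lg(n/z)+\lg^{\epsilon}z$ for (iv). The only deviation is that you take $B=\lg^{\epsilon/2}(z\tau)$ where the paper uses a power of $\lg z$; both are valid, and your discussion of the $\epsilon$-slackening and of ceilings/edge cases is more explicit than the paper's one-line proof.
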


\begin{proof}
    For $(i)$ set $B = 2$ and $\tau = \lg(n/z)$, for $(ii)$ set $B = \lg^{\epsilon} z$ and $\tau = \lg(n/z)$,
    for $(iii)$ set $B = 2$ and $\tau = \lg(n/z) + \lg \lg z$, for $(iv)$ set $B = \lg^{\epsilon'} z$ and $\tau = \lg(n/z) + \lg^{\epsilon} z$
    where $\epsilon' < \epsilon$.
\end{proof}

The leading term in the time complexity of Cor.~\ref{theorem:second} $(i)$ is $O(m)$ whenever $\lg\lg(z) = O(\lg(n/z))$
which is true when $z = O(n/\lg n)$, i.e.\ for all strings that are compressible by at least a logarithmic fraction.
For $\sigma = O(1)$ we have $z = O(n/\lg n)$ for all strings~\cite{Navarro} and thus Thm.~\ref{theorem:main} (i) follows immediately.
Cor.~\ref{theorem:second} $(ii)$ matches previous best space bounds
but obtains a leading term of $O(m)$ for any polynomial compression
rate. Thm.~\ref{theorem:main} $(ii)$ assumes constant-sized alphabet and therefore follows from (ii).
Cor.~\ref{theorem:second} $(iii)$ and $(iv)$ show how to guarantee the fast query times with leading term $O(m)$
without the assumptions on compression ratio that $(i)$ and $(ii)$ require to match this,
but at the cost of increased space.
Thm.~\ref{theorem:main} $(iii)$ is Cor.~\ref{theorem:second} $(ii)$ and 
thm.~\ref{theorem:main} $(iv)$ is Cor.~\ref{theorem:second} $(iv)$.

\subsection{Preprocessing}\label{app:preproc}

We now consider the preprocessing time of the data structure.
Let $Z$ be the LZ77 parse of the string $S$ of length $n$
let \td\ and \tdp\ be the compact tries used in the
index for long primary occurrences.
The compact trie \td\ indexes $O(z\tau)$ substrings of
$S$ with overall length $O(n\tau)$.
Thus we can construct the trie in $O(n\tau)$ time by
sorting the strings and successively inserting them in their sorted order~\cite{Andersson}.
The compact tries \tdp\ indexes $z\tau < n$ suffixes of $S$
and can be built in $O(n)$ time using $O(n)$ space~\cite{Farach}.
The index for short primary occurrences is a generalized suffix tree
over $z$ strings of length $O(\tau)$ with total length $z\tau < n$
and is therefore also built in $O(n)$ time.
The dictionaries used by the prefix search data structures
and for trie navigation contain $O(z\tau)$ keys
and are built in expected linear time using perfect hashing~\cite{Fredman}.
The range reporting data structures used by the primary and secondary index
over $O(z\tau)$ points are built in $O(z\tau \lg (z\tau))$ expected time using Lemma~\ref{lemma:larsen}.

Building the SLP for our verification data structure
takes $O(z\lg(n/z))$ time using Lemma~\ref{rytter}
and finding an appropriate fingerprinting function $\phi$
takes $O(n\lg n)$ expected time using Lemma~\ref{2power}.
The prefix search data structures \td\ and \tdp\ also require that
$\phi$ is collision-free for all prefixes whose length
are either pseudo fat or multiples of $x$.
There are at most $O(z\tau\lg n + n\tau/x)$ such prefixes~\cite{BelazzouguiBPV10}.
If we compute these fingerprints incrementally while doing a traversal of the tries,
we expect all the fingerprints to be unique.
We simply check this by sorting the fingerprints in linear time
and checking for duplicates by doing a linear scan.
If we choose a prime $p = \Theta(n^5)$
with Lemma~\ref{karp2}
then the probability of a collision between any two strings
is $O(1/n^4)$ and by a union bound
over the $O({(n \lg n)}^2)$ possible collisions
the probability that $\phi$ is collision-free is at least $1 - 1/n$.
Thus the expected time to find our required fingerprinting function is $O(n + n\lg n)$.

All in all, the preprocessing time for our combined index is therefore expected $O(n \lg n + n\tau)$.

\bibliography{bib}

\end{document}